\newtheorem{theorem}{Theorem}
\newtheorem{proposition}[theorem]{Proposition}
\newtheorem{lemma}[theorem]{Lemma}
\newtheorem{example}[theorem]{Example}
\newcommand{\be}{\begin{equation}}
\newcommand{\ee}{\end{equation}}
\newcommand{\bea}{\begin{eqnarray}}
\newcommand{\eea}{\end{eqnarray}}
\newcommand{\ba}{\begin{array}}
	\newcommand{\ea}{\end{array}}
\newcommand{\bean}{\begin{eqnarray*}}
	\newcommand{\eean}{\end{eqnarray*}}
\newcommand{\pa}{\partial}
\begin{document}
\title{Solutions of generalized constrained discrete KP hierarchy}
\author{Xuepu Mu$^1$, Mengyao Chen$^1$, Jipeng Cheng$^{1,2*}$, Jingsong He$^3$}
\dedicatory {$^1$ School of Mathematics, China University of Mining and Technology, Xuzhou, Jiangsu 221116, China\\
$^2$ Jiangsu Center for Applied Mathematics (CUMT), \ Xuzhou, Jiangsu 221116, China\\
$^3$ Institute for Advanced Study, Shenzhen University, \ Shenzhen, Guangdong 518060, China}
\thanks{$^*$Corresponding author. Email: chengjp@cumt.edu.cn \& chengjipeng1983@163.com.}
\begin{abstract}
Solutions of a generalized constrained discrete KP (gcdKP) hierarchy with constraint on Lax operator $L^k=(L^k)_{\geq m}+\sum_{i=1}^lq_i\Delta^{-1}\Lambda^mr_i$, are invesitigated by Darboux transformations $T_D(f)=f^{[1]}\cdot\Delta\cdot f^{-1}$ and $T_I(g)=(g^{[-1]})^{-1}\cdot\Delta^{-1}\cdot g$. Due to this special constraint on Lax operator, it is showed that the generating functions $f$ and $g$ of the corresponding Darboux transformations, can only be chosen from (adjoint) wave functions or $(L^k)_{<m}=\sum_{i=1}^lq_i\Delta^{-1}\Lambda^mr_i$. Then successive applications of Darboux transformations for gcdKP hierarchy are discussed. Finally based upon above, solutions of gcdKP hierarchy are obtained from $L^{\{0\}}=\Lambda$ by Darboux transformations.
\\
\textbf{Keywords}: discrete KP hierarchy; Darboux transformation.
\\
\textbf{MSC 2020}: 35Q51, 35Q53, 37K10, 37K40\\
\textbf{PACS}: 02.30.Ik\\
\end{abstract}
\maketitle
\section{Introduction}
KP hierarchy \cite{Date1983,Ohta1988,van1994,Willow2004} has been playing a significant role in mathematical physics and integrable systems. As one of important generalizations for KP hierarchy, discrete KP (dKP) hierarchy \cite{Dickey1999,Adler1999,2Adler1999,Haine2000,Chen2017,Tamizhmani2000,Chen2021,Liu2024} is just the $(n-n')$--th modified KP hierarchy \cite{Jimbo1983}, which can be expressed by the bilinear equation below,
\begin{align}
{\rm Res}_{z}\tau(n,t-[z^{-1}])\tau(n',t'+[z^{-1}])e^{\xi(t-t',z)}z^{n-n'}=0,\quad n\geq n', n, n'\in \mathbb{Z} \label{dKPbilinear}
\end{align}
where ${\rm Res}_z\sum_ia_iz^i=a_{-1}$, $t=(t_1,t_2,\cdots)$, $[z^{-1}]=(z^{-1},z^{-2}/2,\cdots)$ and $\xi(t,z)=\sum_{i=1}^{+\infty}t_iz^i$. Notice that when $n'=n$, \eqref{dKPbilinear} is just the usual KP hierarchy \cite{Date1983,Ohta1988,Willow2004}, while for $n'=n+1$, we will obtain the 1st mKP hierarchy \cite{Chengjgp2018,Konopelchenko1993}. Both usual KP and 1st mKP hierarchies \cite{Date1983,Ohta1988,Konopelchenko1993} can be expressed by pseudo--differential operators $\sum_{i\ll\infty} a_i\partial^i$ with $\partial=\partial_{t_1}$. As for discrete KP, the corresponding Lax opertor $L$ is given by using shift operator $\Lambda$ defined by $\Lambda(f(n))=f(n+1)$ ({\it Frenkel type}) \cite{Dickey1999,Adler1999,2Adler1999} or difference operator $\Delta=\Lambda-1$ ({\it Khesin--Lyubashenko-- Roger (KLR) type}) \cite{Haine2000,Chen2017,Tamizhmani2000,Huang2013}, that is,
\begin{align*}
L=\left\{
    \begin{array}{ll}
      \Lambda+\sum_{j=0}^{\infty}u_{j}(n)\Lambda^{-j}, & \hbox{Frenkel type;} \\
      \Delta+\sum_{j=0}^{\infty}v_{j}(n)\Delta^{-j}, & \hbox{KLR type.}
    \end{array}
  \right.
\end{align*}
The discrete KP hierarchy discussed here is also called semi--discrete KP \cite{Li2013} or differential--difference KP \cite{Chen2017,Tamizhmani2000,Chen2021,Huang2013}, which is different from the fully discrete KP hierarchy\cite{Willow2015}.

The whole system of discrete KP hierarchy is too big to be used freely in integrable systems. In fact, reductions of discrete KP hierarchy are more applicable, since explicit equations are widely needed in the study of integrable systems. Constrained discrete KP (cdKP) hierarchy \cite{Song2023,Li2013,2Li2013,Chen2017,Chen2021,Oevel1996} is just one of famous reductions of discrete KP hierarchy defined by
\begin{align}
L^k=(L^k)_{\geq 0}+\sum_{i=1}^l q_i\Delta^{-1}r_i,\label{cdKPLax}
\end{align}
which is just the generalization of $k$--reduction $L^k=(L^k)_{\geq 0}$. The cdKP hierarchy is related with Ragnisco--Tu hierarchy \cite{Chen2017,Chen2021}, which is a kind of discretization of the AKNS hierarchy \cite{Merola1994}. For cdKP hierarchy, many important results have been investigated, such as Darboux transformations \cite{Oevel1996,Song2023,Yan2022}, bilinear equations \cite{Chen2019, Hu2020}, tau function description \cite{Song2023}, Virasoro symmetries \cite{2Li2013} and so on. But as far as we can know, there is a lack of discussion on solutions for cdKP hierarchy. In fact when discussing corresponding solutions by Darboux transformations, we usually start from one seed solution and then solve some linear differential equations. Notice that when we choose the relatively simple seed solution $L^k=\Lambda^k+\Delta^{-1}$, the corresponding linear differential equation $f_{t_p}=(L^p)_{\geq 0}(f)$ is quite difficult to solve. And also the corresponding solutions can only be for case $l=1$.

Therefore it will be very challenging to investigate the corresponding solutions of cdKP hierarchy.
Here instead, we would like to consider a generalized constrained discrete KP (gcdKP) hierarchy \cite{Oevel1996} defined by
\begin{align}
L^k=(L^k)_{\geq m}+\sum_{i=1}^lq_i\Delta^{-1}\Lambda^mr_i,\quad m\leq k,\label{gcdkpLax}
\end{align}
which is a natural extension of usual cdKP hierarchy. In fact, one can find that cdKP constraint (\ref{cdKPLax}) describes the negative part of $L^k$, while in gcdKP hierarchy, the corresponding constraint is imposed to the part of $L^k$ with $\Lambda$--order less than $m$. As for gcdKP, we believe it should be the analogue of the following constraints \cite{Oevel1998}
\begin{align}
\mathcal{L}^k=(\mathcal{L}^k)_{\geq m}+\sum_{i=1}^lq_i\partial^{-1} r_i\partial^m,\quad m=0,1,2,\label{kpmkphd}
\end{align}
where $\mathcal{L}=\sum_{i=0}^\infty f_i\partial^{1-i}$ with $\partial=\partial_{t_1}$ satisfying $\mathcal{L}_{t_j}=[(\mathcal{L}^j)_{\geq m},\mathcal{L}]$. Notice that \eqref{kpmkphd} is corresponding to constrained KP ($m=0$), constrained modified KP ($m=1$) and constrained Harry--Dym ($m=2$).
Another point we would like to point out is that the structure of gcdKP also comes from generalized bigraded Toda hierarchy \cite{Liuarxiv}, which is the analogue of the famous constrained KP hierarchy \cite{Cheng1992,Cheng1994} and can be viewed as the corresponding 2--component generalization.

In this paper, we have firstly reviewed some backgrounds on discrete KP hierarchy and its Darboux transformations \cite{Oevel1996,Liu2010} $T_D(f)=f^{[1]}\cdot\Delta\cdot f^{-1}$ and $T_I(g)=(g^{[-1]})^{-1}\cdot\Delta^{-1}\cdot g$, then we find that there are two choices for $f$ and $g$ in Darboux transformations of gcdKP hierarchy: (adjoint) wave functions (called Oevel methods \cite{Oevel1993}) and $\sum_{i=1}^lq_i\Delta^{-1}\Lambda^mr_i$ (called Aratyn methods \cite{Aratyn1995}). Then successive applications of $T_D$ and $T_I$ are discussed after explanations of two special points in Aratyn methods. After the preparation above, we find that the methods to obtain solutions for general $l$ in gcdKP hierarchy \eqref{gcdkpLax} and give some examples. Finally some conclusions and discussions are given.
\section{Discrete KP hierarchy}
In this section, we will review some basic facts on discrete KP hierarchy and some results on discrete KP Darboux transformations. One can refer to \cite{Song2023,Adler1999,2Adler1999,Liu2010} for more details.

Firstly the discrete KP (dKP) hierarchy is defined by the Lax equation below
\begin{align}
L_{t_{i}}=[(L^{i})_{\geq0},L],\quad i=1,2,\cdots \label{dKPLaxequaiton}
\end{align}
with Lax operator $L$ given by
\begin{align*}
L=\Lambda+\sum_{j=0}^{\infty}u_{j}(n)\Lambda^{-j},
\end{align*}
In this paper, we use the following symbols for $A=\sum_{i}a_{i}\Lambda^{i}$,
\begin{align*}
A_{\geq k}=\sum_{i\geq k}a_{i}\Lambda^{i}, \quad A_{<k}=\sum_{i<k}a_{i}\Lambda^{i}, \quad A_{[k]}=a_{k}, \quad A^{*}=\sum_{i}\Lambda^{-i}a_{i}.
\end{align*}
And for any function $f$, we use $f^{[l]}=f(n+l)$. The symbol $A(f)$ indicates the action of $A$ on $f$, whereas the symbol $Af$ or $A\cdot f$ will denote the operator product of $A$ and $f$.

The dKP Lax operator $L$ can be represented by dressing operator $$W=1+\sum_{j=1}^{\infty}w_{j}(n;t)\Lambda^{-j},$$ such that
\begin{align}
L=W\Lambda W^{-1}.\label{SatoformLax}
\end{align}
Then Lax equation ~\eqref{dKPLaxequaiton} is equivalent to Sato equation,
\begin{align}
W_{t_{i}}=-(L^{i})_{<0}W.\label{Satolaxi}
\end{align}

Further define wave function $w(n;t,z)$ and adjoint wave function $w^{*}(n;t,z)$ as follows,
\begin{align*}
w(n;t,z)=\Big(W(n,\Lambda)e^{\xi(t,\Lambda)}\Big)(z^{n}), \quad w^{*}(n;t,z)=\Big((W^{-1}(n,\Lambda))^{*}e^{-\xi(t,\Lambda^{-1})}\Big)(z^{-n}),
\end{align*}
satisfying
\begin{align}
L^{i}(w)&=z^{i}w,\quad \partial_{t_{i}}w=(L^{i})_{\geq 0}(w),\quad (L^{i})^{*}(w^{*})=z^{i}w^{*},\quad \partial_{t_{i}}w^{*}=-((L^{i})_{\geq 0})^{*}(w^{*}).\label{Lw}
\end{align}
Wave function $w(n;t,z)$ and adjoint wave function $w^{*}(n;t,z)$ can be generalized to eigenfunction $q$ and adjoint eigenfunction $r$, which are defined by
\begin{align*}
q_{t_{i}}=(L^{i})_{\geq 0}(q),\quad\quad r_{t_{i}}=-((L^{i})_{\geq 0})^{*}(r).\label{eigenfunction}
\end{align*}
\begin{lemma}\cite{Adler1999}\label{lemmaAB}
Let $A(n,\Lambda)=\sum_{i}a_{i}(n)\Lambda^{i},B(n,\Lambda)=\sum_{i}b_{k}(n)\Lambda^{k}$ be two operators, then
\begin{align}
A(n,\Lambda)\cdot B(n,\Lambda)^{*}=\sum_{j\in\mathbb{Z}}{\rm Res}_{z}z^{-1}\Big(A(n,\Lambda)(z^{\pm n})\cdot B(n+j,\Lambda)(z^{\mp n\mp j})\Big)\Lambda^{j},
\end{align}\label{wu}
\end{lemma}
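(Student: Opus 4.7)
The plan is to prove the identity by direct computation: expand both sides and verify equality of the coefficient of $\Lambda^j$ for each $j\in\mathbb{Z}$.

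For the left hand side, I would first use the adjoint rules $(XY)^*=Y^*X^*$, $\Lambda^*=\Lambda^{-1}$, and $f^*=f$ to obtain $B(n,\Lambda)^*=\sum_k \Lambda^{-k}b_k(n)$. Then, applying the commutation rule $\Lambda^p f(n)=f(n+p)\Lambda^p$, the operator product becomes
\begin{align*}
A(n,\Lambda)\cdot B(n,\Lambda)^*
&=\sum_{i,k}a_i(n)\Lambda^{i-k}b_k(n) \\
&=\sum_{i,k}a_i(n)b_k(n+i-k)\Lambda^{i-k}.
\end{align*}
Setting $j=i-k$, so $k=i-j$, the coefficient of $\Lambda^j$ on the LHS equals $\sum_{i}a_i(n)b_{i-j}(n+j)$.

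For the right hand side, I would use $\Lambda^p(z^{\pm n})=z^{\pm(n+p)}$ to write the generating-function expressions $A(n,\Lambda)(z^{\pm n})=\sum_i a_i(n)z^{\pm(n+i)}$ and $B(n+j,\Lambda)(z^{\mp n\mp j})=\sum_k b_k(n+j)z^{\mp(n+j)\pm k}$ with matched sign choice. Multiplying, the $z$-exponent in each term becomes $\pm(i-j+k)$, so extracting ${\rm Res}_z z^{-1}$ (which selects the $z^0$ coefficient) forces $i-j+k=0$, i.e.\ $k=j-i$ for the outer sign choice. Substituting, the inner coefficient of $\Lambda^j$ on the RHS reduces to a single sum which, after relabeling the summation index so as to align with the LHS convention, agrees with $\sum_i a_i(n)b_{i-j}(n+j)$. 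Summing over $j$ reconstructs the full operator and establishes the identity.

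The main obstacle is purely notational bookkeeping: carefully tracking how the argument of $b_k$ shifts as $\Lambda$-powers are moved past it on the LHS, and keeping the $\pm$ conventions on the RHS self-consistent so that the $z$-exponents in the product truly cancel. There is no deeper analytic content — once the indices are aligned the lemma drops out by comparing the two closed-form expressions for the coefficient of $\Lambda^j$. This direct coefficient-matching approach has the additional advantage of showing that the identity is a purely algebraic statement about formal pseudo-difference operators, independent of any convergence or analytic structure in $z$.
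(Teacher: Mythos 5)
Your overall strategy is the right one: the paper itself gives no proof of this lemma (it is quoted from Adler--van Moerbeke), and the standard verification is exactly the direct coefficient-matching you propose. Your left-hand side computation is correct: using $B^*=\sum_k\Lambda^{-k}b_k(n)$ and $\Lambda^p f(n)=f(n+p)\Lambda^p$, the coefficient of $\Lambda^j$ in $A\cdot B^*$ is $\sum_i a_i(n)\,b_{i-j}(n+j)$.

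However, your right-hand side expansion contains a sign slip that makes the final matching step fail as written. By your own rule $\Lambda^p(z^{\pm n})=z^{\pm(n+p)}$, one has $\Lambda^k\bigl(z^{\mp(n+j)}\bigr)=z^{\mp(n+j)\mp k}$, so
\begin{align*}
B(n+j,\Lambda)\bigl(z^{\mp n\mp j}\bigr)=\sum_k b_k(n+j)\,z^{\mp(n+j)\mp k},
\end{align*}
whereas you wrote $z^{\mp(n+j)\pm k}$. With the correct exponent, the product with $A(n,\Lambda)(z^{\pm n})=\sum_i a_i(n)z^{\pm(n+i)}$ has total $z$-exponent $\pm(i-j-k)$; the residue $\mathrm{Res}_z z^{-1}$ then forces $k=i-j$, and the coefficient of $\Lambda^j$ on the right is $\sum_i a_i(n)\,b_{i-j}(n+j)$, identical to the left-hand side with no relabeling needed (and the same for either choice of the $\pm$ convention, which is why the lemma can state both). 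With your exponent the residue instead forces $k=j-i$ and yields $\sum_i a_i(n)\,b_{j-i}(n+j)$, and your claim that a relabeling of the summation index turns this into $\sum_i a_i(n)\,b_{i-j}(n+j)$ is not true: the two pairings (a convolution $a_i b_{j-i}$ versus a correlation $a_i b_{i-j}$) differ for generic coefficients. So the gap is precisely the sign of $k$ in the exponent of the $B$ generating function; once that is corrected, your argument closes immediately and proves the lemma.
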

According to ~\eqref{Lw} and Lemma \ref{lemmaAB}, we can find $w(n;t,z)$ and $w^{*}(n;t,z)$ satisfy the bilinear equation below
\begin{align}
{\rm Res}_{z}z^{-1}w(n;t,z)w^{*}(n';t',z)=0,\quad n>n'.\label{bilineareq}
\end{align}
It can be proved there exists a tau function $\tau(n;t)$ for dKP hierarchy such that
$$w(n;t,z)=\frac{\tau(n;t-[z^{-1}])}{\tau(n;t)}z^{n}e^{\xi(t,z)},\quad\quad w^{*}(n;t,z)=\frac{\tau(n+1;t+[z^{-1}])}{\tau(n+1;t)}z^{-n}e^{-\xi(t,z)},$$
Further, one can find coefficients of Lax operator $L$ can be expressed as tau functions by ~\eqref{SatoformLax} and ~\eqref{Satolaxi}, that is,
	\begin{align}
		u_{0}=\partial_{t_{1}}{\rm log}{\frac{\tau_{n+1}(t)}{\tau_{n}(t)}},\quad\quad u_{-1}=\partial^{2}_{t_{1}}{\rm log}{\tau_{n}(t)},\quad\cdots.\label{taufunction}
	\end{align}
Further, ~\eqref{bilineareq} can be written as:
\begin{align*}
{\rm Res}_{z}\tau(n,t-[z^{-1}])\tau(n',t'+[z^{-1}])e^{\xi(t-t',z)}z^{n-n'}=0,\quad n\geq n'.
\end{align*}

Given an operator $T$ and dKP Lax operator $L$, if $L^{\{1\}}=TLT^{-1}$ satisfy dKP Lax equation, that is,
$$L^{\{1\}}_{t_{1}}=[(L^{\{1\}})^{i}_{\geq 0},L^{\{1\}}],$$
then $T$ is called Darboux transformation operator of the dKP hierarchy \cite{Liu2010,Oevel1996}. Just as we know, the dKP hierarchy has two classes of basic Darboux transformation operators \cite{Liu2010,Oevel1996},
\begin{align*}
	&\bullet T_{D}(f)=f^{[1]}\cdot\Delta\cdot f^{-1},\\
	&\bullet T_{I}(g)=(g^{[-1]})^{-1}\cdot\Delta^{-1}\cdot g,
\end{align*}
where $f$ is dKP eigenfunction and $g$ is dKP adjoint eigenfunction.

\begin{proposition} \cite{Liu2010,Oevel1996,Song2023}
	Under $T_{D}(f_{1})$, the corresponding transformations of eigenfunction $f$, adjoint eigenfunction $g$, and tau function $\tau$ of the dKP hierarchy are
	\begin{align*}
		f\rightarrow  f^{\{1\}}=T_{D}(f_{1})f,\quad g\rightarrow g^{\{1\}}=(T_{D}^{-1}(f_{1}))^{*}g,\quad \tau\rightarrow \tau^{\{1\}}=f_{1}\tau.
	\end{align*}
	Under $T_{I}(g_{1})$, the corresponding transformations are
	\begin{align*}
		f\rightarrow f^{\{1\}}=T_{I}(g_{1})f,\quad g=g^{\{1\}}=(T_{I}^{-1}(g_{1}))^{*}g,\quad \tau=\tau^{\{1\}}=g_{1}^{[-1]}\tau.
	\end{align*}	
\end{proposition}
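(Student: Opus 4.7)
The plan is to verify each transformation rule in turn, starting with the operator-level consistency, then the eigenfunction and adjoint eigenfunction rules, and finally the tau function. I would treat $T=T_D(f_1)$ in detail; the analysis for $T_I(g_1)$ is formally adjoint.

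First I would extract the key operator identity
\begin{align*}
T_{t_i} = (L^{\{1\}})^i_{\geq 0}\, T - T\, (L^i)_{\geq 0}.
\end{align*}
This drops out of dressing: set $W^{\{1\}}=TW$ and demand that it satisfy Sato's equation $W^{\{1\}}_{t_i}=-(L^{\{1\}})^i_{<0}W^{\{1\}}$; comparing with $W_{t_i}=-(L^i)_{<0}W$ and using $L^{\{1\}}T=TL$, the residual statement is exactly the displayed identity. The consistency of this identity as a strictly differential-order operator (rather than containing spurious shifts) rests on $T_D(f_1)(f_1)=f_1^{[1]}\Delta(1)=0$, which is automatic from the definition.

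With the identity in hand, for any eigenfunction $f$ with $f_{t_i}=(L^i)_{\geq 0}(f)$, differentiating $f^{\{1\}}=T(f)$ gives
\begin{align*}
f^{\{1\}}_{t_i} = T_{t_i}(f)+T(f_{t_i}) = (L^{\{1\}})^i_{\geq 0}(Tf) - T(L^i)_{\geq 0}(f) + T(L^i)_{\geq 0}(f) = (L^{\{1\}})^i_{\geq 0}(f^{\{1\}}),
\end{align*}
confirming the eigenfunction rule. Taking the formal adjoint of the key identity and pairing with an adjoint eigenfunction $g$ (so $g_{t_i}=-((L^i)_{\geq 0})^*(g)$) yields, in a symmetric calculation, $g^{\{1\}}_{t_i}=-((L^{\{1\}})^i_{\geq 0})^*(g^{\{1\}})$ with $g^{\{1\}}=(T^{-1})^*(g)$. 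This disposes of the (adjoint) eigenfunction transformations.

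For the tau transformation $\tau^{\{1\}}=f_1\tau$ I would evaluate the new wave function in two ways. On the operator side, $w^{\{1\}}=T(w)=w^{[1]}-(f_1^{[1]}/f_1)\,w$. On the tau side, $w^{\{1\}}$ must equal $(\tau^{\{1\}}(n;t-[z^{-1}])/\tau^{\{1\}}(n;t))\,z^n e^{\xi(t,z)}$. Substituting the tau representation of $w$ and $w^{[1]}$ from \eqref{taufunction}, the identification of the two expressions reduces to a discrete Fay-type identity that expresses $f_1(n;t-[z^{-1}])$ through shifts of $\tau(n;t)$; such identities are the standard consequence of the bilinear equation \eqref{dKPbilinear}. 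The case $T_I(g_1)$ is parallel, working with adjoint wave functions $w^{*}$, relying on $T_I(g_1)^*(g_1)=0$, and producing the shift $\tau^{\{1\}}=g_1^{[-1]}\tau$ because $T_I$ contains a negative shift $\Delta^{-1}$.

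The main obstacle is the Fay-type identity that underpins the tau-function transformations. The operator-level computations are mechanical once the identity for $T_{t_i}$ is in place; what is genuinely non-trivial is verifying that $f_1\tau$ (respectively $g_1^{[-1]}\tau$) actually reproduces the dressed wave function through the tau-function formula, and this requires the discrete trisecant-type identity applied to an eigenfunction.
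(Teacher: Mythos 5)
First, a point of reference: the paper does not prove this proposition at all --- it is quoted as known background from \cite{Liu2010,Oevel1996,Song2023} --- so your attempt can only be compared with the standard arguments in those references. Your handling of the eigenfunction and adjoint eigenfunction rules follows that standard route and is essentially correct: once the intertwining identity $T_{t_i}=(L^{\{1\}})^i_{\geq 0}T-T(L^i)_{\geq 0}$ is in hand, the two differentiations you perform settle $f^{\{1\}}=T(f)$ and $g^{\{1\}}=(T^{-1})^{*}(g)$. One caveat: obtaining that identity by ``demanding that $W^{\{1\}}=TW$ satisfy Sato's equation'' is mildly circular, since the fact that $T_D(f_1)W\Lambda^{-1}$ (note the extra $\Lambda^{-1}$ needed to keep the dressing operator of the monic form $1+O(\Lambda^{-1})$) obeys \eqref{Satolaxi} is precisely what makes $T_D$ a Darboux transformation; in the references the identity is verified directly from $T_D(f_1)=\Lambda-f_1^{[1]}/f_1$, the eigenfunction property of $f_1$ and $T_D(f_1)(f_1)=0$. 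Since the paper has already granted that $T_D,T_I$ are Darboux transformation operators, I would count this as looseness rather than error.

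The genuine gap is in the tau-function statement, which is the only nontrivial content of the proposition. Two problems. (i) Bookkeeping: because $W^{\{1\}}=T_D(f_1)W\Lambda^{-1}$, the transformed wave function is $w^{\{1\}}=z^{-1}T_D(f_1)(w)$ (and $w^{\{1\}}=z\,T_I(g_1)(w)$ for $T_I$), not $T_D(f_1)(w)$; without this factor your matching equation is off by a power of $z$. (ii) Substance: after that correction, proving $\tau^{\{1\}}=f_1\tau$ is exactly equivalent to the identity
\begin{align*}
f_1(n;t-[z^{-1}])\,\tau(n;t-[z^{-1}])
=f_1(n;t)\,\frac{\tau(n;t)\,\tau(n+1;t-[z^{-1}])}{\tau(n+1;t)}
-z^{-1}\,f_1(n+1;t)\,\tau(n;t-[z^{-1}]),
\end{align*}
and this is not ``a standard consequence of the bilinear equation \eqref{dKPbilinear}'': that equation constrains $\tau$ (equivalently the wave functions) alone, whereas the identity above involves an arbitrary eigenfunction $f_1$. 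To establish it one needs an additional ingredient --- for instance a spectral (superposition) representation of $f_1$ in terms of the wave functions $w(n;t,\lambda)$, to which a difference Fay identity can then be applied, or an argument through the squared-eigenfunction potential $\Delta^{-1}(f_1g)$ --- and the analogous ingredient is needed for $\tau^{\{1\}}=g_1^{[-1]}\tau$ under $T_I$, where your justification is only the heuristic remark that $T_I$ contains $\Delta^{-1}$. As written, your argument defers precisely the step that carries the content of the tau-function transformation, so the proof is incomplete there.
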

\begin{proposition} \cite{Song2023}
	The basic dKP Darboux transformation $T_{D}$ and $T_{I}$ can commute with each other, which can be shown in the figure below.
	$$\xymatrix{&&L^{\{1\}}\ar[drr]^{T_\beta^{\{1\}}}&&\\
		L^{\{0\}}\ar[urr]^{T_\alpha}\ar[drr]_{T_\beta}& &&& L^{\{2\}}\\
		&&L^{\{1\}}\ar[urr]_{T_\alpha^{\{1\}}}&&}$$
	Here $\alpha,\beta\in\{D,I\}$, $T^{\{1\}}_{\alpha}$ and $T^{\{1\}}_{\beta}$ represents the Darboux transformation operators generated by eigenfunction or adjoint eigenfunction corresponding to $L^{\{1\}}$. The figure shows that the following relations hold,
	\begin{align}\label{TdTi}
		&T_{I}(g^{\{1\}})T_{D}(f)=T_{D}(f^{\{1\}})T_{I}(g),\nonumber\\
		&T_{D}(f^{\{1\}}_1)T_{D}(f_2)=T_{D}(f^{\{1\}}_2)T_{D}(f_1),\\
		&T_{I}(g^{\{1\}}_1)T_{I}(g_2)=T_{I}(g^{\{1\}}_2)T_{I}(g_1),\nonumber
	\end{align}
	where $A^{\{1\}}$ represents the result of $A$ under the action of the second $T$. For example,  in $T_{I}(g^{\{1\}})T_{D}(f)$, $g^{\{1\}}=(T_{D}^{-1}(f))^{*}(g)$.
\end{proposition}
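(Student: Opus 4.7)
The plan is to verify each of the three identities as an equality of (formal pseudo-)difference operators. The guiding principle, applicable in the first two cases, is the Casoratian uniqueness for monic finite-order difference operators: a monic $N$th order difference operator is determined by its $N$-dimensional kernel, so it suffices to exhibit an $N$-dimensional set of common elementary solutions.

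For $T_D(f_1^{\{1\}}) T_D(f_2) = T_D(f_2^{\{1\}}) T_D(f_1)$, both sides are monic second-order difference operators with leading term $\Lambda^2$. Since $T_D(h)(h) = h^{[1]}\Delta(1) = 0$ and, by the convention fixed in the excerpt, $f_i^{\{1\}} = T_D(f_j)(f_i)$ for $\{i,j\}=\{1,2\}$, each side annihilates both $f_1$ and $f_2$. Hence both equal the unique monic second-order difference operator with kernel $\mathrm{span}(f_1,f_2)$.

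For $T_I(g_1^{\{1\}}) T_I(g_2) = T_I(g_2^{\{1\}}) T_I(g_1)$ I would pass to inverses. A direct computation from $T_I(g) = (g^{[-1]})^{-1}\Delta^{-1} g$ gives $T_I(g)^{-1} = g^{-1}\Delta\, g^{[-1]} = T_D\bigl((g^{[-1]})^{-1}\bigr)$, so the inverted identity is a relation among four $T_D$-operators. After tracking the $\{1\}$-labels by means of $g^{\{1\}} = (T_I(\cdot)^{-1})^{*}(g)$ to identify the relevant shifted functions, the kernel argument of the previous paragraph applies verbatim to $(g_i^{[-1]})^{-1}$ in place of $f_i$.

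The mixed identity $T_I(g^{\{1\}}) T_D(f) = T_D(f^{\{1\}}) T_I(g)$ is the main obstacle, because both sides are pseudo-difference operators of order $0$ and so are not characterised by a finite-dimensional kernel. I would attack it by direct algebraic comparison: substitute $f^{\{1\}} = T_I(g)(f) = (g^{[-1]})^{-1}\Delta^{-1}(gf)$ and $g^{\{1\}} = (T_D(f)^{-1})^{*}(g) = -(f^{[1]})^{-1}\Lambda\Delta^{-1}(fg)$, expand both sides using the commutation rule $\Delta\cdot a = a^{[1]}\Delta + \Delta(a)$ together with its $\Delta^{-1}$ analogue, and reduce the comparison to the single scalar identity $(g^{\{1\}})^{[-1]} f = -f^{\{1\}} g^{[-1]}$. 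This last identity is simply $-\Delta^{-1}(fg)$ written two ways; it records the consistency of the tau function transformations $\tau\mapsto f\tau$ under $T_D(f)$ and $\tau\mapsto g^{[-1]}\tau$ under $T_I(g)$ along the two sides of the commutativity square, and is immediate from the definitions above.
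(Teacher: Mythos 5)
Your verification is correct in outline, but note that the paper itself offers no proof of this proposition: it is simply quoted from \cite{Song2023} (see also \cite{Liu2010,Oevel1996}), so your argument stands as an independent check rather than a variant of anything in the text. Three points should be tightened. First, in the $T_DT_D$ case the kernel-uniqueness argument needs the explicit genericity hypothesis that the Casoratian $f_1f_2^{[1]}-f_2f_1^{[1]}$ does not vanish; this is exactly what makes the kernel two-dimensional and also what keeps $f_i^{\{1\}}=T_D(f_j)(f_i)$ nonzero so that $T_D(f_i^{\{1\}})$ is defined. Second, in the $T_IT_I$ case the reduction via $T_I(g)^{-1}=T_D\bigl((g^{[-1]})^{-1}\bigr)$ is right, but the argument does not apply ``verbatim'' to $\hat g_i:=(g_i^{[-1]})^{-1}$: the inverted identity reads $T_D(\hat g_2)\,T_D\bigl(\widehat{g_1^{\{1\}}}\bigr)=T_D(\hat g_1)\,T_D\bigl(\widehat{g_2^{\{1\}}}\bigr)$, whose common kernel is spanned by $\widehat{g_1^{\{1\}}}$ and $\widehat{g_2^{\{1\}}}$ rather than by $\hat g_1,\hat g_2$; the step that must actually be checked is $T_D\bigl(\widehat{g_1^{\{1\}}}\bigr)\bigl(\widehat{g_2^{\{1\}}}\bigr)=\hat g_2$ (and its mirror image), which does hold --- using $g_1^{\{1\}}=(T_I^{-1}(g_2))^{*}(g_1)=g_1^{[-1]}-\frac{g_2^{[-1]}}{g_2}g_1$ one finds $\widehat{g_1^{\{1\}}}=g_2^{[-1]}/C$, $\widehat{g_2^{\{1\}}}=-g_1^{[-1]}/C$ with $C=g_1^{[-2]}g_2^{[-1]}-g_2^{[-2]}g_1^{[-1]}$, and a one-line computation gives the claim --- so please include this identification explicitly. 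Third, in the mixed case your computation goes through: setting $P=\Delta^{-1}(fg)$ (with the fixed branch $\Delta^{-1}=\sum_{i\geq1}\Lambda^{-i}$, so that both occurrences agree) one has $f^{\{1\}}=P/g^{[-1]}$ and $(g^{\{1\}})^{[-1]}=-P/f$, and expanding with $\Delta\cdot a=a^{[1]}\Delta+\Delta(a)$ shows both products equal $1-\frac{f}{P}\,\Delta^{-1}\,g$; but be aware that this collapse uses not only your displayed scalar identity $(g^{\{1\}})^{[-1]}f=-f^{\{1\}}g^{[-1]}$ but also $\Delta(P)=fg$, which is automatic once the explicit substitutions are made yet is not a consequence of that single identity alone, so the write-up should state both.
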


Due to ~\eqref{TdTi}, we can only discuss the following Darboux chain
\begin{align*}
	&\quad L^{\{0\}} \xrightarrow{T_{D}(f_{1}^{\{0\}})} L^{\{1\}} \xrightarrow{T_{D}(f_{2}^{\{1\}})} L^{\{2\}} \xrightarrow{T_{D}(f_{3}^{\{2\}})} \cdots \xrightarrow{T_{D}(f_{M-1}^{\{M-2\}})} L^{\{M-1\}} \xrightarrow{T_{D}(f_{M}^{\{M-1\}})} L^{\{M\}}\\
	&\xrightarrow{T_{I}(g_{1}^{\{M\}})} L^{\{M+1\}} \xrightarrow{T_{I}(g_{2}^{\{M+1\}})} \cdots \xrightarrow{T_{I}(g_{N-1}^{\{M+N-2\}})} L^{\{M+N-1\}} \xrightarrow{T_{I}(g_{N}^{\{M+N-1\}})} L^{\{M+N\}},\label{Darbouxchain}
\end{align*}
where $f_{j}^{\{0\}}$ and $g_{j}^{\{0\}}$ are the corresponding eigenfunctions and adjoint eigenfunctions with respect to $L^{\{0\}}$, respectively. In what follows, we will denote
\begin{align}
	T^{\{M,N\}}(f_{1},\cdots,f_{M};g_{1},\cdots,g_{N})=T_{I}(g_{N}^{\{N+M-1\}})\cdots T_{I}(g_{1}^{\{M\}})T_{D}(f_{M}^{\{M-1\}})\cdots T_{D}(f_{1}^{\{0\}}).
\end{align}\label{yifan}
Sometimes we also use $T^{\{M,N\}}_{(\vec{M},\vec{N})}$ or $T^{\{M,N\}}$ for short, where $\vec{M}=(M,M-1,\cdots,2,1)$ and $\vec{N}=(N,N-1,\cdots,2,1)$. In Appendix, determinant formulas for $T^{\{M,N\}}$ \cite{Song2023,Liu2010} are given.
For dKP object $A$, we denote $A_{(\vec{M},\vec{N})}^{\{M,N\}}$ or $A^{\{M+N\}}$ to be the transformed object $A$ under $T^{\{M,N\}}_{(\vec{M},\vec{N})}$.

\section{Generalized constrained discrete KP hierarchy}
In this section, we firstly show the gcdKP hierarchy is well--defined, then its equivalent bilinear formulation is given. Finally, some examples of gcdKP hierarchy are given. Firstly, the $(k,m,l)$--gcdKP hierarchy \cite{Oevel1996} $(k\geq m,k\geq1,l\geq1)$ is defined by the following system:
\begin{align}
&L^{k}=\Lambda^{k}+u_{k-1}\Lambda^{k-1}+\cdots+u_{m}\Lambda^{m}+\sum_{j=1}^{l}q_{j}\Lambda^{m}\Delta^{-1}r_{j},\label{cdkp}\\
&L_{t_{i}}=[(L^{i})_{\geq0},L],\label{dkplaxeq}\\
&q_{j,t_{i}}=(L^{i})_{\geq 0}(q_{j}),\quad r_{j,t_{i}}=-((L^{i})_{\geq 0})^{*}(r_{j}),\label{eigenfunction}
\end{align}
where $\Delta=\Lambda-1$ and $\Delta^{-1}=\sum^{+\infty}_{i=1}\Lambda^{-i}$. This system is well-defined, which can be seen from following proposition.
\begin{proposition}\label{prop:gcdkp}
The $(k,m,l)$--gcdKP hierarchy ~\eqref{cdkp} satisfies
\begin{align*}
\pa_{t_i}\big(L^{k}_{<m}-\sum_{j=1}^{l}q_{j}\Lambda^{m}\Delta^{-1}r_{j})=0.
\end{align*}
\end{proposition}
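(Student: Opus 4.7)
The plan is to verify that $\partial_{t_i}(L^k_{<m})$ equals $\partial_{t_i}\bigl(\sum_{j=1}^{l}q_j\Lambda^m\Delta^{-1}r_j\bigr)$ by computing the two sides through the Lax equation and the (adjoint) eigenfunction equations respectively, and matching them term by term. Setting $B=(L^i)_{\geq 0}$, equation \eqref{dkplaxeq} gives $(L^k)_{t_i}=[B,L^k]$, hence $\partial_{t_i}(L^k_{<m})=\bigl([B,L^k]\bigr)_{<m}$. Decomposing $L^k=L^k_{\geq m}+L^k_{<m}$, I note that both $B\cdot L^k_{\geq m}$ and $L^k_{\geq m}\cdot B$ have $\Lambda$-order at least $m$, since $B$ has $\Lambda$-order $\geq 0$; their commutator therefore contributes nothing to the $<m$ part, and the problem reduces to identifying $\bigl([B,L^k_{<m}]\bigr)_{<m}$.

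The next step is to establish the two projection identities valid for every non-negative difference operator $B=\sum_{k\geq 0}b_k\Lambda^k$ and arbitrary functions $q,r$:
\[
(B\cdot q\Lambda^m\Delta^{-1}r)_{<m}=B(q)\,\Lambda^m\Delta^{-1}r,\qquad (q\Lambda^m\Delta^{-1}r\cdot B)_{<m}=q\,\Lambda^m\Delta^{-1}B^*(r).
\]
Both follow by expanding $\Delta^{-1}=\sum_{s\geq 1}\Lambda^{-s}$, applying the commutation rule $\Lambda^s f=f^{[s]}\Lambda^s$ to move shifts past coefficients, and then collecting coefficients of $\Lambda^j$ for $j<m$; the remaining sums recombine back into the compact forms on the right. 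The key intermediate fact is $(\Lambda^{k+m}\Delta^{-1})_{<m}=\Lambda^m\Delta^{-1}$ for every $k\geq 0$, which is precisely what lets the truncation collapse onto a single $\Delta^{-1}$ factor.

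Applying these identities to $L^k_{<m}=\sum_{j=1}^{l} q_j\Lambda^m\Delta^{-1}r_j$ produces
\[
\bigl([B,L^k_{<m}]\bigr)_{<m}=\sum_{j=1}^{l}\bigl(B(q_j)\Lambda^m\Delta^{-1}r_j-q_j\Lambda^m\Delta^{-1}B^*(r_j)\bigr).
\]
The (adjoint) eigenfunction equations \eqref{eigenfunction}, namely $q_{j,t_i}=B(q_j)$ and $r_{j,t_i}=-B^*(r_j)$, identify the right-hand side with $\partial_{t_i}\bigl(\sum_{j=1}^{l} q_j\Lambda^m\Delta^{-1}r_j\bigr)$, which yields the proposition. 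The main obstacle is the clean verification of the two projection identities; everything else is straightforward bookkeeping of the Lax and eigenfunction equations.
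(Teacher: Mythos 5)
Your proposal is correct and follows essentially the same route as the paper: apply the Lax equation, observe that the $\geq m$ part of $L^k$ contributes nothing to the $<m$ projection of the commutator, and then combine projection identities of the type $(A_{\geq0}f\Delta^{-1})_{<0}=A_{\geq0}(f)\Delta^{-1}$, $(\Delta^{-1}fA_{\geq0})_{<0}=\Delta^{-1}A^{*}_{\geq0}(f)$ with the (adjoint) eigenfunction equations. The only difference is cosmetic: you prove the $m$-shifted versions of these identities directly via $\Delta^{-1}=\sum_{s\geq1}\Lambda^{-s}$ and $(\Lambda^{k+m}\Delta^{-1})_{<m}=\Lambda^{m}\Delta^{-1}$, whereas the paper conjugates by $\Lambda^{m}$ to reduce to the $<0$ case.
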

    \begin{proof}
Firstly by ~\eqref{cdkp}, $L^{k}_{<m}=\sum_{j=1}^{l}q_{j}\Lambda^{m}\Delta^{-1}r_{j}.$ Then by ~\eqref{dkplaxeq}, we know
\begin{align*}
\partial _{t_{i}}(L_{<m}^{k})=(B_{k}\cdot \sum_{j=1}^{l}q\Delta^{-1}r^{[m]})_{<0}\Lambda^{m}-(\sum_{j=1}^{l}q\Delta^{-1}r^{[m]}\cdot B_{k}^{[m]})_{<0}\Lambda^{m}.
\end{align*}
Then by using ~\eqref{eigenfunction} and formulas for operator $A$ and function $f$,
\begin{align}\label{Af}
(A_{\geq0}f\Delta^{-1})_{<0}=A_{\geq0}(f)\Delta^{-1}, \quad (\Delta^{-1}fA_{\geq0})_{<0}=\Delta^{-1}A^*_{\geq0}(f),
\end{align}
we can get $\partial _{t_{i}}(L^{k}_{<m})$ is consistent with $\partial _{t_{i}}(\sum_{j=1}^{l}q\Lambda^{m}\Delta^{-1}r)$.
\end{proof}
The $(k,m,l)$--gcdKP hierarchy can also be expressed by bilinear equations.
\begin{theorem}\label{theorem:bilinear--gcdKP}
	Given $(k,m,l)$--gcdKP hierarchy, let $w(n;t,z)$ and $w^{*}(n;t,z)$ be corresponding wave function and adjoint wave function respectively, then $L^{k}_{<m}=\sum_{j=1}^{l}q_{j}\Lambda^{m}\Delta^{-1}r_{j}$ is equivalent to
	\begin{align}\label{bilinear}
		{\rm Res}_{z}z^{-1+k}w(n;t,z)w^{*}(n';t',z)=\sum_{j=1}^{l}q_{j}(n,t)r_{j}(n',t'),\quad\quad n-n'>-m.
	\end{align}
\end{theorem}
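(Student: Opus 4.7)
The plan is to translate the operator constraint $L^k_{<m} = \sum_{j=1}^l q_j \Lambda^m \Delta^{-1} r_j$ into the $t = t'$ case of the bilinear identity \eqref{bilinear} by matching coefficients of $\Lambda^i$ via Lemma \ref{lemmaAB}, and then extend to general $t, t'$ using the time evolutions. I would first apply Lemma \ref{lemmaAB} with $A = W\Lambda^k$ and $B = (W^{-1})^*$, so that $A \cdot B^* = W\Lambda^k W^{-1} = L^k$. Using $W(z^{n+k}) = z^k w(n;t,z) e^{-\xi(t,z)}$ and $(W^{-1})^*(n+i, \Lambda)(z^{-n-i}) = w^*(n+i;t,z) e^{\xi(t,z)}$, the exponentials cancel in $A(z^n) \cdot B(n+i)(z^{-n-i})$, yielding
\begin{equation*}
L^k = \sum_{i \in \mathbb{Z}} {\rm Res}_z z^{-1+k} w(n;t,z) w^*(n+i;t,z) \Lambda^i.
\end{equation*}

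Next, expanding $q_j \Lambda^m \Delta^{-1} r_j = \sum_{s \geq 1} q_j(n,t) r_j(n+m-s, t) \Lambda^{m-s}$ gives the coefficient of $\Lambda^i$ in $\sum_{j=1}^l q_j \Lambda^m \Delta^{-1} r_j$ as $\sum_{j=1}^l q_j(n,t) r_j(n+i,t)$ for $i < m$ and zero for $i \geq m$. Matching the two expressions for $L^k_{<m}$ coefficient by coefficient, $L^k_{<m} = \sum_{j=1}^l q_j \Lambda^m \Delta^{-1} r_j$ is equivalent to
\begin{equation*}
{\rm Res}_z z^{-1+k} w(n;t,z) w^*(n+i;t,z) = \sum_{j=1}^l q_j(n,t) r_j(n+i,t), \quad i < m,
\end{equation*}
which upon setting $n' = n + i$ is precisely the $t = t'$ specialization of \eqref{bilinear} on the range $n - n' > -m$.

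To upgrade from $t = t'$ to arbitrary $t, t'$, I would introduce
\begin{equation*}
F(n,n';t,t') = {\rm Res}_z z^{-1+k} w(n;t,z) w^*(n';t',z) - \sum_{j=1}^l q_j(n,t) r_j(n',t'),
\end{equation*}
and, writing $(L^p)_{\geq 0} = \sum_{i=0}^p a_{p,i}(n,t) \Lambda^i$, use the Lax-type evolutions of $w$, $w^*$, $q_j$, $r_j$ from the problem to derive $\partial_{t_p} F(n,n';t,t') = \sum_{i=0}^p a_{p,i}(n,t) F(n+i, n'; t, t')$ and $\partial_{t'_p} F(n,n';t,t') = -\sum_{i=0}^p a_{p,i}(n'-i, t') F(n, n'-i; t, t')$. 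Both shifts $n \to n+i$ and $n' \to n'-i$ preserve the region $n - n' > -m$, so $F$ on this region satisfies a closed homogeneous linear system in $(t,t')$ with vanishing data on the diagonal $t = t'$; uniqueness of solutions forces $F \equiv 0$. The reverse implication is obtained by the same Adler expansion applied to \eqref{bilinear} restricted to $t = t'$. The main technical point is this last propagation step, namely verifying that the shifts of $n, n'$ in $\partial_{t_p} F$ and $\partial_{t'_p} F$ stay inside $n - n' > -m$ so that the diagonal vanishing can be flowed out; the preceding coefficient matching via Lemma \ref{lemmaAB} is essentially a direct computation.
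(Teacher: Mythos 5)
Your proposal is correct and takes essentially the same route as the paper: the paper likewise reduces \eqref{bilinear} to its $t=t'$ specialization via the evolution equations and Taylor expansion, and then converts the resulting residue identities for each $\Lambda^{i}$, $i<m$, into the operator constraint $(L^{k})_{<m}=\sum_{j}q_{j}\Lambda^{m}\Delta^{-1}r_{j}$ using Lemma \ref{lemmaAB}. The only differences are cosmetic: you perform the coefficient matching first and spell out the diagonal-to-general $(t,t')$ propagation (which the paper compresses into a one-line appeal to the flows and Taylor expansion), and evolving in $t$ alone, as the paper does, already suffices.
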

\begin{proof}
	Firstly by $\partial_{t_{i}}w(n;t,z)=L_{\geq 0}^{i}(w(n;t,z)), \partial_{t_{i}}q(n,t)=L_{\geq 0}^{i}(q(n,t))$ and Taylor expansion, we can know ~\eqref{bilinear} is equivalent to
	\begin{align*}
		{\rm Res}_{z}z^{-1+k}w(n;t,z)w^{*}(n+i,t,z)=\sum_{j=1}^{l}q_{j}(n,t)r_{j}(n+i,t),\quad\quad i<m,
	\end{align*}
	that is
	\begin{align}
		\sum_{i<m}{\rm Res}_{z}z^{-1+k}w(n;t,z)w^{*}(n+i;t,z)\Lambda^{i}=\sum_{i<m}\Big(\sum_{j=1}^{l}q_{j}(n,t)r_{j}(n+i,t)\Big)\Lambda^{i}.\label{equ}
	\end{align}
	Further by Lemma ~\ref{lemmaAB}, we can find ~\eqref{equ} is equivalent to
	\begin{align*}
		\Big(W(n,\Lambda)\Lambda^{k}W(n,\Lambda)^{-1}\Big)_{<m}=\sum_{i<m}\Big(\sum_{j=1}^{l}q_{j}(n,t)r_{j}(n+i,t)\Big)\Lambda^{i}
		=\sum_{j=1}^{l}q_{j}(n,t)\Lambda^{m}\Delta^{-1}r_{j}(n,t),
	\end{align*}
	that is $(L^{k})_{<m}=\sum_{j=1}^{l}q_{j}(n,t)\Lambda^{m}\Delta^{-1}r_{j}(n,t)$.
	So we finish this proof.
\end{proof}
Next we give some examples for the $(k,m,l)$-gcdKP hierarchy.
\begin{example}
	$\bullet$ When $k=1$, $m=0$ and $l=1$,
	\begin{align*}
		L=\Lambda+u_{0}+q\Delta^{-1}r.
	\end{align*}\label{pingguo}
	\quad\quad$1)$\quad$t_{1}$\quad flow:
	\begin{align*}	
		&\partial_{t_{1}}u_{0}=q^{[1]}r-qr^{[-1]},\\
		&\partial_{t_{1}}q=q^{[1]}+u_{0}q,\\
		&\partial_{t_{1}}r=-r^{[-1]}-u_{0}r.
	\end{align*}
	\quad\quad$2)$\quad$t_{2}$\quad flow:
	\begin{align*}
		&\partial_{t_{2}}u_{0}=q^{[2]}r+q^{[1]}ru_{0}^{[1]}+q^{[1]}ru_{0}-qr^{[-2]}-qr^{[-1]}u_{0}-qr^{[-1]}u_{0}^{[-1]},\\
		&\partial_{t_{2}}q=q^{[2]}+u_{0}^{[1]}q^{[1]}+u_{0}q^{[1]}+u_{0}^{2}q+q^{[1]}rq+qr^{[-1]}q,\\
		&\partial_{t_{2}}r=-r^{[-2]}-u_{0}^{[-1]}r^{[-1]}-u_{0}r^{[-1]}-u_{0}^{2}r-q^{[1]}r^{2}-qr^{[-1]}r.
	\end{align*}
	$\bullet$ When $k=m=1$ and $l=1$,
	$$L=\Lambda+q\Lambda\Delta^{-1}r.$$
	\quad\quad$1)$\quad$t_{1}$\quad flow:
	\begin{align*}
		&\partial_{t_{1}}q=q^{[1]}+q^{2}r,\\
		&\partial_{t_{1}}r=-r^{[-1]}-qr^{2}.
	\end{align*}
	\quad\quad$2)$\quad$t_{2}$\quad flow:
	\begin{align*}
		&\partial_{t_{1}}q=q^{[2]}-q^{2[1]}r^{[1]}-qrq^{[1]}+q^{[1]}rq+q^{2}r^{[-1]}+q^{3}r^{2},\\
		&\partial_{t_{1}}r=-r^{[-2]}+q^{[1]}r^{[1]}r^{[-1]}+qrr^{[-1]}-q^{[1]}r^{2}-qr^{[-1]}r-q^{2}r^{3}.
	\end{align*}
	$\bullet$ When $k=1$, $m=-1$ and $l=1$,
	$$L=\Lambda+u_{0}+u_{-1}\Lambda^{-1}+q\Lambda^{-1}\Delta^{-1}r.$$
	\quad\quad$1)$\quad$t_{1}$\quad flow:
	\begin{align*}
		&\partial_{t_{1}}u_{0}=u_{-1}^{[1]}-u_{-1},\\
		&\partial_{t_{1}}u_{-1}=u_{0}u_{-1}-u_{-1}u_{0}^{[-1]}+q^{[1]}r^{[-1]}-qr^{[-2]},\\
		&\partial_{t_{1}}q=q^{[1]}+u_{0}q,\\
		&\partial_{t_{1}}r=-r^{[-1]}-u_{0}r.
	\end{align*}
	\quad\quad$2)$\quad$t_{2}$\quad flow:
	\begin{align*}
		&\partial_{t_{2}}u_{0}=q^{[2]}r-qr^{[-2]}+(u_{0}^{[1]}+u_{0})u_{-1}^{[1]}-u_{-1}(u_{0}+u_{0}^{[-1]}),\\
		&\partial_{t_{2}}u_{-1}=u_{0}^{2}u_{-1}-u_{-1}u_{0}^{2[-1]}+q^{[2]}r^{[-1]}+(u_{0}^{[1]}+u_{0})q^{[1]}r^{[-1]}-qr^{[-2]}(u_{0}^{[-1]}+u_{0}^{[-2]})-qr^{[-3]},\\
		&\partial_{t_{2}}q=q^{[2]}+(u_{0}^{[1]}+u_{0})q^{[1]}+u_{0}^{2}q+u_{-1}^{[1]}q+u_{-1}q,\\
		&\partial_{t_{2}}r=-r^{[-2]}-(u_{0}^{[-1]}+u_{0})r^{[-1]}-u_{0}^{2}r-u_{-1}^{[1]}r-u_{-1}r.
	\end{align*}
\end{example}

\section{Darboux transformations for generalized constrained discrete KP hierarchy}\label{DTs}
Due to the special constraint on Lax operator, the generating functions $f$ and $g$ in dKP Darboux transformations $T_D(f)$ and $T_I(g)$ can not be chosen arbitrarily. In the section, Darboux transformations for gcdKP hierarchy are obtained by suitable choice of special $f$ and $g$, and also the corresponding successive applications are given.
\subsection{Darboux transformations for the  $(k,m,l)$--gcdKP hierarchy}
Given an operator $T$ and Lax operator $L$ of the $(k,m,l)$-gcdKP hierarchy, if $L^{\{1\}}=TLT^{-1}$ satisfies
\begin{align*}
&\partial_{t_{i}}L^{\{1\}}=[(L^{\{1\}})^{i}_{\geq 0},L^{\{1\}}],\\
&(L^{k})^{\{1\}}=(L^{k})^{{\{1\}}}_{\geq m}+\sum_{j=1}^{l}q_{j}^{\{1\}}\Lambda^{m}\Delta^{-1}r_{j}^{\{1\}},\\
&q_{j,t_{i}}^{\{1\}}=(L^{\{1\}})^{i}_{\geq 0}(q^{\{1\}}_{j}),\quad r_{j,t_{i}}=-((L^{\{1\}})^{i}_{\geq 0})^{*}(r^{\{1\}}_{j}),
\end{align*}
then $T$ is called Darboux transformation operator of the $(k,m,l)$--gcdKP hierarchy ~\eqref{cdkp}.
\begin{lemma}\label{lemmaTd}
For any operator $A$ and functions $f$,
\begin{align}
&\Big(T_{D}(f)\cdot A_{\geq m}\cdot T_{D}^{-1}(f)\Big)_{<m}=\Big(T_{D}(f)\cdot A_{\geq m}\Big)(f)\cdot\Lambda^{m}\Delta^{-1}\cdot(f^{[1]})^{-1},\label{Tdzheng}\\
&\Big(T_{D}(f)\cdot \sum_{j=1}^{l}q_{j}\Lambda^{m}\Delta^{-1}r_{j}\cdot T_{D}^{-1}(f)\Big)_{<m}=\sum_{j=1}^{l}\Big(T_{D}(f)\cdot q_{j}\Lambda^{m}\Delta^{-1}r_{j}\Big)(f)\cdot\Lambda^{m}\Delta^{-1}\cdot(f^{[1]})^{-1}\nonumber\\
&\quad\quad\quad\quad\quad\quad\quad\quad\quad\quad\quad\quad\quad\quad\quad\quad+\sum_{j=0}^{l}(T_{D}(f))(q_{j})\cdot\Lambda^{m}\Delta^{-1}\cdot(T_{D}^{-1}(f))^{*}(r_{j}).\label{Tdfu}
\end{align}
\end{lemma}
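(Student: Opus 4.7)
The plan is to prove \eqref{Tdzheng} and \eqref{Tdfu} in order, relying on a common algebraic ingredient about $\Lambda^{i}\Delta^{-1}$. For \eqref{Tdzheng}, rewrite the left-hand side as $B\cdot f\cdot\Delta^{-1}\cdot(f^{[1]})^{-1}$, where $B:=T_{D}(f)A_{\geq m}$ is again an operator of $\Lambda$-order $\geq m$ (since $T_{D}(f)=\Lambda-f^{[1]}/f$ has orders $0$ and $1$ only). The key observation is that, for every $i\geq m$, expanding $\Lambda^{i-m}=(1+\Delta)^{i-m}=\sum_{k=0}^{i-m}\binom{i-m}{k}\Delta^{k}$ gives
\begin{align*}
\Lambda^{i}\Delta^{-1}=\Lambda^{m}\Delta^{-1}+\Lambda^{m}\sum_{k=1}^{i-m}\binom{i-m}{k}\Delta^{k-1},
\end{align*}
where the second summand is a polynomial in $\Lambda$ of order between $m$ and $i-1$. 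Hence $(\Lambda^{i}\Delta^{-1})_{<m}=\Lambda^{m}\Delta^{-1}$ for all $i\geq m$. Writing $B=\sum_{i\geq m}b_{i}\Lambda^{i}$, this collapses $(Bf\Delta^{-1})_{<m}$ into $\sum_{i\geq m}b_{i}f^{[i]}\,\Lambda^{m}\Delta^{-1}=B(f)\Lambda^{m}\Delta^{-1}$, and since right-multiplying by the function $(f^{[1]})^{-1}$ preserves $\Lambda$-orders, \eqref{Tdzheng} follows.

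For \eqref{Tdfu}, first note that the full operator $T_{D}(f)q_{j}\Lambda^{m}\Delta^{-1}r_{j}T_{D}^{-1}(f)$ already has $\Lambda$-order $\leq m-1$, so the $(\cdot)_{<m}$ truncation is redundant. The auxiliary identity I will use, obtained by right-multiplying both sides by $\Delta$, is
\begin{align*}
\Delta^{-1}\alpha\Delta^{-1}=\alpha^{\#}\Delta^{-1}-\Delta^{-1}(\alpha^{\#})^{[1]},
\end{align*}
where $\alpha^{\#}:=\Delta^{-1}(\alpha)$ is the formal antidifference of the function $\alpha$ (so $\Delta(\alpha^{\#})=\alpha$). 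Taking $\alpha=r_{j}f$, setting $\beta_{j}:=(r_{j}f)^{\#}$, and using the operator identity $T_{D}(f)\cdot g=g^{[1]}\Delta+T_{D}(f)(g)$ valid for any function $g$, the left-hand side decomposes into four summands: two of $\Lambda$-order exactly $m$, and two of $\Lambda$-order $\leq m-1$. The order-$m$ pair cancels, because unravelling $(T_{D}^{-1}(f))^{*}(r_{j})=-(f^{[1]})^{-1}\Delta^{-1}(f^{[1]}r_{j}^{[1]})$ gives $\tilde{r}_{j}^{[m]}=-\beta_{j}^{[m+1]}/f^{[m+1]}$, which is exactly the coefficient that makes them opposite. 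The two remaining summands then match the right-hand side of \eqref{Tdfu} after invoking the identity $(T_{D}(f)q_{j}\Lambda^{m}\Delta^{-1}r_{j})(f)=T_{D}(f)(q_{j}\Lambda^{m}\beta_{j})=T_{D}(f)(q_{j}\beta_{j}^{[m]})$.

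The main obstacle will be the bookkeeping in \eqref{Tdfu}: organising the expansion so the four summands are visible and the order-$m$ cancellation is apparent, and carefully tracking the formal antidifference $\beta_{j}$ through its role both in the first correction term of \eqref{Tdfu} and in identifying the adjoint-dressed $\tilde{r}_{j}=(T_{D}^{-1}(f))^{*}(r_{j})$. Once the expansion of $\Lambda^{i}\Delta^{-1}$ and the identity for $\Delta^{-1}\alpha\Delta^{-1}$ are in hand, the remainder is straightforward symbolic manipulation.
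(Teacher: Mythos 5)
Your proof is correct, and it reaches the same two identities by a route that is algebraically equivalent to the paper's but packaged differently. For \eqref{Tdzheng} the paper conjugates by $\Lambda^{\pm m}$ to reduce to order zero and then quotes the ready-made projection formula \eqref{Af}, $(A_{\geq0}f\Delta^{-1})_{<0}=A_{\geq0}(f)\Delta^{-1}$; you instead work directly at level $m$ and reprove that formula from scratch via the binomial expansion $(\Lambda^{i}\Delta^{-1})_{<m}=\Lambda^{m}\Delta^{-1}$ for $i\geq m$, which is a self-contained and slightly more transparent derivation of the same fact. For \eqref{Tdfu} the paper uses the exchange formula $X_{1}X_{2}=X_{1}(f_{2})\Delta^{-1}g_{2}+f_{1}\Delta^{-1}X_{2}^{*}(g_{1})$ for $X_{i}=f_{i}\Delta^{-1}g_{i}$ and then projects each piece with \eqref{Af}; your auxiliary identity $\Delta^{-1}\alpha\Delta^{-1}=\alpha^{\#}\Delta^{-1}-\Delta^{-1}(\alpha^{\#})^{[1]}$ is exactly that exchange formula stripped of the outer functions (since $X_{2}^{*}(g_{1})=-g_{2}\,\Delta^{-1}(f_{2}^{[1]}g_{1}^{[1]})$), so the core computation is the same. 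What your version buys is the observation that the whole conjugated operator already has $\Lambda$-order at most $m-1$, so no truncation is needed in \eqref{Tdfu} and the two order-$m$ pieces must cancel identically; you exhibit this cancellation explicitly via $\tilde{r}_{j}^{[m]}=-\beta_{j}^{[m+1]}/f^{[m+1]}$, whereas in the paper the projection $(\cdot)_{<0}$ silently discards each order-zero piece, relying on linearity of the projection. Both arguments are formal in the same sense (your antidifference $\beta_{j}=\Delta^{-1}(r_{j}f)$ is on the same footing as the paper's use of $\Delta^{-1}$ acting on functions), and your term-by-term treatment also makes clear that the second sum in \eqref{Tdfu} should run over $j=1,\dots,l$, consistent with the evident typo in the lower limit there.
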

\begin{proof}
Firstly notice that
$$\Big(T_{D}(f)\cdot A_{\geq m}\cdot T_{D}^{-1}(f)\Big)_{<m}=\Big(T_{D}(f)\cdot A_{\geq m}\cdot\Lambda^{-m}f^{[m]}\Delta^{-1}\Big)_{<0}(f^{[m+1]})^{-1}\Lambda^{m},$$
then by ~\eqref{Af}, we can get ~\eqref{Tdzheng}. As for ~\eqref{Tdfu}, by using the formula for $X_{i}=f_{i}\Delta^{-1}g_{i}\ (i=1,2)$,
\begin{align*}
X_{1}\cdot X_{2}=X_{1}(f_{2})\cdot\Delta^{-1}\cdot g_{2}+f_{1}\cdot\Delta^{-1}\cdot X_{2}^{*}(g_{1}),
\end{align*}
we have
\begin{align*}
T_{D}(f)\cdot q\Lambda^{m}\Delta^{-1}r\cdot T_{D}^{-1}(f)=&\Big(T_{D}(f)\cdot q\Delta^{-1}r^{[m]}\cdot T_{D}^{-1}(f^{[m]})\Big)_{<0}\Lambda^{m}\\
=&\Big(T_{D}(f)\cdot(q\Delta^{-1}r^{[m]})(f^{[m]})\cdot\Delta^{-1}\cdot(f^{[m+1]})^{-1}\Big)_{<0}\Lambda^{m}\\
&+\Big(T_{D}(f)\cdot q\cdot\Delta^{-1}\cdot(T_{D}^{-1}(f^{[m]}))^{*}(r^{[m]})\Big)_{<0}\Lambda^{m}.
\end{align*}
Finally by ~\eqref{Af}, we can get ~\eqref{Tdfu}.
\end{proof}
Similarly, we can get the next lamma.
\begin{lemma}\label{lemmaTi}
For any operator $A$ and functions $g$,
\begin{align*}
&\Big(T_{I}(g)\cdot A_{\geq m}\cdot T_{I}^{-1}(g)\Big)_{<m}=(g^{[-1]})^{-1}\cdot\Lambda^{m}\Delta^{-1}\cdot\Big(A_{\geq m}\cdot T_{I}^{-1}(g)\Big)^{*}(g),\\
&\Big(T_{I}(g)\cdot\sum_{j=1}^{l}q_{j}\Lambda^{m}\Delta^{-1}r_{j}\cdot T_{I}^{-1}(g)\Big)_{<m}=\sum_{j=0}^{l}T_{I}(g)(q_{j})\cdot\Lambda^{m}\Delta^{-1}\cdot (T_{I}^{-1}(g))^{*}(r_{j})\\
&\quad\quad\quad\quad\quad\quad\quad\quad\quad\quad\quad\quad\quad\quad\quad+\sum_{j=1}^{l}(g^{[-1]})^{-1}\cdot\Lambda^{m}\Delta^{-1}\cdot\Big( (q_{j}\Lambda^{m}\Delta^{-1}r_{j})\cdot T_{I}^{-1}(g)\Big)^{*}(g).
\end{align*}
\end{lemma}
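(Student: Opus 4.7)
The plan is to mirror the proof of Lemma~\ref{lemmaTd}, exploiting the symmetry that $T_I(g)=(g^{[-1]})^{-1}\Delta^{-1}g$ carries the $\Delta^{-1}$ on the \emph{left} of the sandwiched function, whereas $T_D(f)=f^{[1]}\Delta f^{-1}$ has $\Delta$ on the right. Accordingly, rather than pushing $\Lambda^m$ to the right through $f^{[m]}\Delta^{-1}$ as was done for $T_D(f)$, I push it to the \emph{left} through $g$ and $\Delta^{-1}$, using $g\Lambda^m=\Lambda^m g^{[-m]}$ and $[\Lambda^m,\Delta^{-1}]=0$. The two essential tools remain the identities in \eqref{Af} and the product rule $X_1X_2=X_1(f_2)\cdot\Delta^{-1}\cdot g_2+f_1\cdot\Delta^{-1}\cdot X_2^*(g_1)$ for $X_i=f_i\Delta^{-1}g_i$, already invoked in the proof of Lemma~\ref{lemmaTd}.

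For the first identity I factor $A_{\geq m}=\Lambda^m\cdot B$ with $B=\Lambda^{-m}A_{\geq m}$ of order $\geq 0$, and commute to obtain
\begin{align*}
T_I(g)\cdot A_{\geq m}\cdot T_I^{-1}(g)=(g^{[-1]})^{-1}\cdot\Lambda^m\cdot\Delta^{-1}\cdot g^{[-m]}\cdot B\cdot T_I^{-1}(g).
\end{align*}
Taking $(\cdot)_{<m}$ pulls $(g^{[-1]})^{-1}\Lambda^m$ out to the left and reduces the computation to $\bigl(\Delta^{-1}\cdot g^{[-m]}\cdot B\, T_I^{-1}(g)\bigr)_{<0}$. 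Since $T_I^{-1}(g)=g^{-1}\Delta g^{[-1]}=\Lambda-g^{-1}g^{[-1]}$ has order $\leq 1$, the factor $B\, T_I^{-1}(g)$ is of order $\geq 0$, so the second identity of \eqref{Af} applies and yields $\Delta^{-1}\cdot(B\, T_I^{-1}(g))^*(g^{[-m]})$. Finally, $(AB)^*=B^*A^*$, $(\Lambda^m)^*=\Lambda^{-m}$, and $\Lambda^m(g^{[-m]})=g$ collapse this to $\Delta^{-1}\cdot(A_{\geq m}\, T_I^{-1}(g))^*(g)$, producing the stated formula after restoring $\Lambda^m\Delta^{-1}=\Delta^{-1}\Lambda^m$.

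For the second identity I rewrite each summand via $q_j\Lambda^m\Delta^{-1}r_j=q_j\Delta^{-1}r_j^{[m]}\Lambda^m$ and apply the product rule to $T_I(g)\cdot q_j\Delta^{-1}r_j^{[m]}$ with $(f_1,g_1)=((g^{[-1]})^{-1},g)$ and $(f_2,g_2)=(q_j,r_j^{[m]})$, then right-multiply by $\Lambda^m T_I^{-1}(g)$. The first resulting summand, $T_I(g)(q_j)\cdot\Lambda^m\Delta^{-1}r_j\cdot T_I^{-1}(g)$, is treated by applying \eqref{Af} to $\Delta^{-1}r_jT_I^{-1}(g)$; this splits off the order-$0$ piece $r_j^{[-1]}$ (arising from $\Delta^{-1}\Lambda=1+\Delta^{-1}$), which after multiplication by $\Lambda^m$ lies in the $(\geq m)$ part and so is discarded by $(\cdot)_{<m}$, leaving $T_I(g)(q_j)\cdot\Lambda^m\Delta^{-1}\cdot(T_I^{-1}(g))^*(r_j)$ and hence the first sum of the lemma. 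The second summand, $(g^{[-1]})^{-1}\Delta^{-1}h\cdot\Lambda^m T_I^{-1}(g)$ with $h=(q_j\Delta^{-1}r_j^{[m]})^*(g)$, is handled analogously: commute $\Lambda^m$ leftward via $h\Lambda^m=\Lambda^m h^{[-m]}$ and $[\Lambda^m,\Delta^{-1}]=0$, apply \eqref{Af} once more (again discarding an order-$m$ remainder), and then use the adjoint identity $(T_I^{-1}(g))^*(h^{[-m]})=\bigl((q_j\Lambda^m\Delta^{-1}r_j)\cdot T_I^{-1}(g)\bigr)^*(g)$, which is a direct consequence of $(AB)^*=B^*A^*$ and $(\Lambda^m)^*=\Lambda^{-m}$, to obtain the second sum.

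The main obstacle is the bookkeeping of the order-$m$ ``tails'' thrown away by $(\cdot)_{<m}$ in part~2: both summands acquire such tails because $T_I^{-1}(g)$ has a leading $\Lambda$ term and $\Delta^{-1}\Lambda=1+\Delta^{-1}$ generates an extra $1$, and one must verify that these tails always lift to operators of order exactly $m$ (hence in the $(\geq m)$ part) rather than leaking into the $(<m)$ part. Once that is confirmed, everything else reduces to systematic commutations of $\Lambda^m$ with $\Delta^{-1}$ and with functions, and to adjoint reversals via $(AB)^*=B^*A^*$.
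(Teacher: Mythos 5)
Your argument is correct, and it is exactly the mirrored version of the paper's proof of Lemma \ref{lemmaTd} that the paper itself appeals to (it proves the $T_I$ case by "similarly"): you commute $\Lambda^m$ to the left, reduce to a $(\cdot)_{<0}$ computation handled by \eqref{Af}, and use the same product rule $X_1X_2=X_1(f_2)\cdot\Delta^{-1}\cdot g_2+f_1\cdot\Delta^{-1}\cdot X_2^{*}(g_1)$, with the discarded order-$0$ remainders correctly lifted to order exactly $m$. No changes needed.
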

\begin{proposition}\label{propositionDarboux}
Under Darboux transformation operator $T=T_{D}(f)$ or $T_{I}(g)$, the Lax operator $L$ of the $(k,m,l)$--gcdKP hierarchy ~\eqref{cdkp} becomes
\begin{align}
(L^{k})^{\{1\}}=(L^{k})_{\geq m}+q^{\{1\}}_{0}\Lambda^{m}\Delta^{-1}r^{\{1\}}_{0}+\sum_{j=1}^{l}q_{j}^{\{1\}}\Lambda^{m}\Delta^{-1}r_{j}^{\{1\}}.\label{l1}
\end{align}
$\bullet$ Under the action of $T_{D}(f)$:
\begin{align}
q^{\{1\}}_{0}&=(T_{D}(f)\cdot L^{k})(f),\quad r^{\{1\}}_{0}=(f^{[1]})^{-1},\label{Tdqr0}\\
q^{\{1\}}_{j}&=(T_{D}(f))(q),\quad\quad\quad r^{\{1\}}_{j}=(T_{D}^{-1}(f))^{*}(r).\label{Tdqr1}
\end{align}
$\bullet$ Under the action of $T_{I}(g)$:
\begin{align}
q^{\{1\}}_{0}&=(g^{[-1]})^{-1},\quad\quad r^{\{1\}}_{0}=(L^{k}\cdot T_{I}^{-1}(g))^{*}(g),\label{Tiqr0}\\
q^{\{1\}}_{j}&=(T_{I}(g))(q),\quad\quad r^{\{1\}}_{j}=(T_{I}^{-1}(g))^{*}(r).\label{Tiqr1}
\end{align}
\end{proposition}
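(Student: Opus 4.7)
The plan is to compute $(L^k)^{\{1\}} = T L^k T^{-1}$ by substituting the gcdKP constraint $L^k = (L^k)_{\geq m} + \sum_{j=1}^l q_j \Lambda^m \Delta^{-1} r_j$ and then inspecting the $<m$ component of the result. Because the ``$\geq m$'' component of $(L^k)^{\{1\}}$ defines $(L^{\{1\}})^k_{\geq m}$ automatically, the whole content of the proposition is the identification $\bigl((L^k)^{\{1\}}\bigr)_{<m} = q_0^{\{1\}} \Lambda^m \Delta^{-1} r_0^{\{1\}} + \sum_{j=1}^l q_j^{\{1\}} \Lambda^m \Delta^{-1} r_j^{\{1\}}$ with the claimed data. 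Lemmas \ref{lemmaTd} and \ref{lemmaTi} do essentially all the work; the remaining task is to recombine the pieces they produce into the advertised form.

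For $T = T_D(f)$ I would first apply \eqref{Tdzheng} with $A = L^k$ to the ``$\geq m$'' part, obtaining the contribution $\bigl(T_D(f)\cdot (L^k)_{\geq m}\bigr)(f)\cdot\Lambda^m\Delta^{-1}\cdot (f^{[1]})^{-1}$. Then \eqref{Tdfu}, applied to the constraint sum, produces two pieces: a first piece of exactly the same shape (with $(L^k)_{\geq m}$ replaced by $\sum_j q_j \Lambda^m \Delta^{-1} r_j$), plus a second piece $\sum_j T_D(f)(q_j)\cdot\Lambda^m\Delta^{-1}\cdot(T_D^{-1}(f))^*(r_j)$. By linearity of $T_D(f)$, the two ``first'' pieces add up to $\bigl(T_D(f)\cdot L^k\bigr)(f)\cdot\Lambda^m\Delta^{-1}\cdot (f^{[1]})^{-1}$, which identifies $q_0^{\{1\}} = (T_D(f)\cdot L^k)(f)$ and $r_0^{\{1\}} = (f^{[1]})^{-1}$ as in \eqref{Tdqr0}, while the second piece is precisely \eqref{Tdqr1}. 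The case $T = T_I(g)$ is symmetric via Lemma \ref{lemmaTi}: the ``pure'' contribution from the $\geq m$ part and the ``first'' contributions from each $q_j \Lambda^m \Delta^{-1} r_j$ all share the factor $(g^{[-1]})^{-1}\cdot\Lambda^m\Delta^{-1}\cdot(\,\cdot\,)^*(g)$, and by linearity of the adjoint they combine into $(g^{[-1]})^{-1}\cdot\Lambda^m\Delta^{-1}\cdot\bigl(L^k\cdot T_I^{-1}(g)\bigr)^*(g)$, giving \eqref{Tiqr0}; the remaining terms yield \eqref{Tiqr1}.

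I expect the only real difficulty to be bookkeeping: one must check that the ``first'' contribution coming from $(L^k)_{\geq m}$ and the ``first'' contributions coming from the constraint sum really do reassemble into a single clean object built from the full $L^k$, so that the definition of the new pair $(q_0^{\{1\}}, r_0^{\{1\}})$ does not depend on the particular splitting of $L^k$ into its $\geq m$ and $<m$ parts. Once that merging is confirmed, the rest is immediate: the Lax equation for $L^{\{1\}}$ and the fact that $q_j^{\{1\}}, r_j^{\{1\}}$ for $j \geq 1$ are genuine eigenfunctions and adjoint eigenfunctions of $L^{\{1\}}$ are the standard Oevel formulas for dKP Darboux transformations, which apply because $T_D(f)$ and $T_I(g)$ are already Darboux operators for the unrestricted dKP hierarchy.
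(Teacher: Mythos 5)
Your proposal is correct and matches the paper's (implicit) argument: the paper states Proposition \ref{propositionDarboux} as an immediate consequence of Lemmas \ref{lemmaTd} and \ref{lemmaTi}, applied to the splitting $L^k=(L^k)_{\geq m}+\sum_{j=1}^l q_j\Lambda^m\Delta^{-1}r_j$, with the two contributions sharing the factor $\Lambda^m\Delta^{-1}(f^{[1]})^{-1}$ (resp.\ $(g^{[-1]})^{-1}\Lambda^m\Delta^{-1}$) recombined by linearity into the single term \eqref{Tdqr0} (resp.\ \eqref{Tiqr0}), exactly as you describe. Your closing remarks on the transformed eigenfunctions and adjoint eigenfunctions likewise coincide with the paper's reliance on the standard dKP Darboux transformation formulas.
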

Notice that, there are $l+1$ terms for $(L^{k})^{\{1\}}_{<m}$ in ~\eqref{l1}. However, there are $l$ terms for $(L^{k})_{<m}$ in ~\eqref{cdkp}. In order to make ~\eqref{l1} consistent with ~\eqref{cdkp}, one of the terms in $(L^{k})^{\{1\}}_{<m}$ must be zero. Usually, we have two methods to do this. In the first way, we can set $$q^{\{1\}}_{0}\Lambda^{m}\Delta^{-1}r^{\{1\}}_{0}=0,$$ which is called Oevel method \cite{Oevel1993}. In this way, $$(L^{k})^{\{1\}}=(L^{k})_{\geq m}+\sum_{j=1}^{l}q_{j}^{\{1\}}\Lambda^{m}\Delta^{-1}r_{j}^{\{1\}},$$ where $q_{j}^{\{1\}}$ and $r_{j}^{\{1\}}$ is given by ~\eqref{Tdqr1} or ~\eqref{Tiqr1} .
Actually, when $T=T_{D}(f)$, we can take $f(t)=w(n;t,z)$, then $q_{0}^{\{1\}}=(T_{D}(f)\cdot L^{k})(q)=0.$ While in the case $T=T_{I}(g)$, we can set $g(t)=w^{*}(n;t,z)$, then
$r_{0}^{\{1\}}=(L^{k}\cdot T_{I}^{-1}(g))^{*}(g)=0.$

In the second way, we can set $$q^{\{1\}}_{j_{0}}\Lambda^{m}\Delta^{-1}r^{\{1\}}_{j_{0}}=0,$$ for fixed $j_{0}$, $1\leq j_{0}\leq l$, which is called Aratyn method \cite{Aratyn1995}.  In this case, $$(L^{k})^{\{1\}}=(L^{k})_{\geq m}+q^{\{1\}}_{0}\Lambda^{m}\Delta^{-1}r^{\{1\}}_{0}+\sum_{j\neq j_{0}}q_{j}^{\{1\}}\Lambda^{m}\Delta^{-1}r_{j}^{\{1\}},$$
where $q_{0}^{\{1\}}$ and $r_{0}^{\{1\}}$ are given by ~\eqref{Tdqr0} and ~\eqref{Tiqr0}, while $q_{j}^{\{1\}}$ and $r_{j}^{\{1\}}$ $(j\neq j_{0})$ are given by ~\eqref{Tdqr1} and ~\eqref{Tiqr1}.
Notice that, when $T=T_{D}(f)$, take $f(t)=q_{j_{0}}(t)$, then
$q_{j_{0}}^{\{1\}}=(T_{D}(f))(q)=0.$ While in the case $T=T_{I}(g)$, we can set $g(t)=r_{j_{0}}(t)$, then $r_{j_{0}}^{\{1\}}=(T_{I}^{-1}(g))^{*}(r)=0.$
In this method, we can let $q_{0}^{\{1\}}$ and $r_{0}^{\{1\}}$ as new $q_{j_{0}}^{\{1\}}$ and $r_{j_{0}}^{\{1\}}$ respectively.
Finally, we summarize the above results in the following theorem.
\begin{theorem}\label{theoremsigua}
	Under Darboux transformation operator $T_{D}(f)$ and $T_{I}(g)$, the Lax operator $L$ of $(k,m,l)$-gcdKP hierarchy will become
	\begin{align*}
		(L^{k})^{\{1\}}=(L^{k})^{\{1\}}_{\geq m}+\sum_{j=1}^{l}q_{j}^{\{1\}}\Lambda^{m}\Delta^{-1}r_{j}^{\{1\}},
	\end{align*}
	where $q_{j}^{\{1\}}$ and $r_{j}^{\{1\}}$ are given in the tables below.
\begin{table}[ht]
	\caption{Oevel method}
    \centering
\begin{tabular}{llllll}
	\hline\hline
	& \ \ \  $q_{j}^{\{1\}}=$ & \ \ \ \ \ $r_{j}^{\{1\}}=$&  \\
	\hline
	$f=w$ &   \ $(T_{D}(w))(q)$ & \ $(T_{D}^{-1}(w))^{*}(r)$& \\
	$g=w^{*}$  &   \  $(T_{I}(w^{*}))(q)$ & \ $(T_{I}^{-1}(w^{*}))^{*}(r)$& \\
	\hline\hline
\end{tabular}
\end{table}
\begin{table}[ht]
	\caption{Aratyn method\quad $(j\neq j_{0}, r\neq r_{0})$}
	\centering
	\begin{tabular}{llllll}
		\hline\hline
		& \  \ \ \ \ \ \ $q_{0}^{\{1\}}=$ &  \ \ \ \ \ \ \ \ \ $r_{0}^{\{1\}}=$ & \ \ \ \  \ \ \ $q_{j}^{\{1\}}=$&\ \ \ \ \ \ $r_{j}^{\{1\}}=$\\
		\hline
		$f=q_{j_{0}}$ & \  \ \ $T_{D}(q_{j_{0}})\cdot L^{k}(q_{j_{0}})$ & \ \ \ $(q_{j_{0}}^{[1]})^{-1}$ & \ \ \ \ $T_{D}(q_{j_{0}})(q)$&\ \ $(T_{D}^{-1}(q_{j_{0}}))^{*}(r)$\\
		$g=r_{j_{0}}$  &  \ \ \ $(r_{j_{0}}^{[-1]})^{-1}$ & \  \ \ $(L^{k}\cdot T_{I}^{-1}(r_{j_{0}}))^{*}(r_{j_{0}})$ & \ \ \ \  $T_{I}(r_{j_{0}})(q)$&\ \ $(T_{I}^{-1}(r_{j_{0}}))^{*}(r)$\\
		\hline\hline
	\end{tabular}
\end{table}
\end{theorem}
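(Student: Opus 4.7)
The plan is to derive the theorem as a direct corollary of Proposition \ref{propositionDarboux}. That proposition already guarantees that after applying $T_D(f)$ or $T_I(g)$ the transformed operator has the shape
$(L^k)^{\{1\}}=(L^k)^{\{1\}}_{\geq m}+q_0^{\{1\}}\Lambda^m\Delta^{-1}r_0^{\{1\}}+\sum_{j=1}^l q_j^{\{1\}}\Lambda^m\Delta^{-1}r_j^{\{1\}}$,
with $l+1$ rank-one terms in the negative part. Since the gcdKP constraint \eqref{cdkp} admits only $l$ such terms, the task reduces to showing that, for each of the four specific choices in the tables, exactly one of the $l+1$ summands is forced to vanish, and that relabelling then yields the Oevel or Aratyn output.

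The unifying mechanism I would invoke is the pair of identities
\begin{align*}
T_D(f)(f)=f^{[1]}\Delta(f^{-1}\cdot f)=f^{[1]}\Delta(1)=0,\qquad
(T_I^{-1}(g))^*(g)=g^{[-1]}\Delta^*(g^{-1}\cdot g)=g^{[-1]}\Delta^*(1)=0,
\end{align*}
obtained by recognizing $T_I^{-1}(g)=g^{-1}\Delta g^{[-1]}$ and taking formal adjoints. These two observations are the only real computations needed; everything else is bookkeeping against the formulas \eqref{Tdqr0}--\eqref{Tiqr1}.

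With this in hand I would verify the four cases in turn. For the Oevel row with $f=w$, using $L^k(w)=z^kw$ from \eqref{Lw}, the coefficient $q_0^{\{1\}}=(T_D(w)\cdot L^k)(w)=z^kT_D(w)(w)=0$, so the $(q_0^{\{1\}},r_0^{\{1\}})$ term drops and $(q_j^{\{1\}},r_j^{\{1\}})$ is read off from \eqref{Tdqr1}. The Oevel row with $g=w^*$ is dual: $(L^k)^*(w^*)=z^kw^*$ pushes through the adjoint, giving $r_0^{\{1\}}=(L^k\cdot T_I^{-1}(w^*))^*(w^*)=z^k(T_I^{-1}(w^*))^*(w^*)=0$. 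For the Aratyn row with $f=q_{j_0}$, the identity $T_D(q_{j_0})(q_{j_0})=0$ kills $q_{j_0}^{\{1\}}$ in \eqref{Tdqr1}, and after the relabelling $q_0^{\{1\}}\mapsto q_{j_0}^{\{1\}}$, $r_0^{\{1\}}\mapsto r_{j_0}^{\{1\}}$ specified in the statement, the remaining $l$ terms reproduce the Aratyn table. The Aratyn row with $g=r_{j_0}$ is handled symmetrically using $(T_I^{-1}(r_{j_0}))^*(r_{j_0})=0$.

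The only step requiring some care is the adjoint computation on the $T_I$ side, since one must correctly identify $T_I^{-1}(g)=g^{-1}\Delta g^{[-1]}$ and then move $(L^k)^*$ past $(T_I^{-1}(g))^*$ via $(AB)^*=B^*A^*$; I expect this to be the main (minor) obstacle, but it is routine once set up. After these four verifications, all that remains is to note that the eigenfunction/adjoint-eigenfunction flows for $q_j^{\{1\}},r_j^{\{1\}}$ follow automatically from the general transformation rule for Darboux-transformed (adjoint) eigenfunctions stated in Proposition \ref{propositionDarboux}, so the tabulated output is indeed the Lax operator of a new $(k,m,l)$--gcdKP hierarchy.
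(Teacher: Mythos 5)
Your proposal is correct and follows essentially the same route as the paper: the paper states this theorem as a summary of Proposition \ref{propositionDarboux} together with the preceding discussion, in which exactly one of the $l+1$ rank-one terms is made to vanish via $q_0^{\{1\}}=(T_D(w)\cdot L^k)(w)=z^kT_D(w)(w)=0$ (resp. $r_0^{\{1\}}=z^k(T_I^{-1}(w^*))^*(w^*)=0$) for the Oevel choices and $T_D(q_{j_0})(q_{j_0})=0$ (resp. $(T_I^{-1}(r_{j_0}))^*(r_{j_0})=0$) with the subsequent relabelling for the Aratyn choices, which is precisely your argument.
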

\subsection{The successive applications of Darboux transformation}
In the $(k,m,l)$--gcdKP hierarchy, the successive applications of Darboux transformation are different from the unconstrained case. There are two important points in Aratyn method which are showed as follows.

Firstly, the following Darboux chain
\begin{align*}
	L^{\{0\}} \xrightarrow{T_{\alpha}(f^{\{0\}})} L^{\{1\}} \xrightarrow{T_{\alpha}(f^{\{1\}})} L^{\{2\}} \xrightarrow{T_{\alpha}(f^{\{2\}})} \cdots \xrightarrow{T_{\alpha}(f^{\{M-1\}})} L^{\{M\}},
\end{align*}
is equivalent to the following chain
\begin{align*}
	L^{\{0\}} \xrightarrow{T_{\alpha}(h_{\alpha,0}^{\{0\}})} L^{\{1\}} \xrightarrow{T_{\alpha}(h_{\alpha,1}^{\{1\}})} L^{\{2\}} \xrightarrow{T_{\alpha}(h_{\alpha,2}^{\{2\}})} \cdots \xrightarrow{T_{\alpha}(h_{\alpha,M-1}^{\{M-1\}})}  L^{\{M\}},
\end{align*}
where we can take $D$ or $I$ for $\alpha$. When $T_{\alpha}=T_{D}$ (or $T_{I}$), $f$ is eigenfunction $q$ (or adjoint eigenfunction $r$) in $\sum_{j=1}^{l}q_{j}\Lambda^{m}\Delta^{-1}r_{j}$.
When $T_{\alpha}=T_{D}$, $h_{\alpha,j}=(L^{\{0\}k})^{j}(f)$, while for $T_{\alpha}=T_{I}$, $h_{\alpha,j}=(L^{\{0\}k} \Delta^{-1})^{*}(f)$. In fact, we can find $h_{\alpha,0},h_{\alpha,1},h_{\alpha,2},\cdots,h_{\alpha,M-1}$ are independent eigenfunctions (or adjoint eigenfunctions) corresponding to $L^{\{0\}}$. According to Proposition \ref{propositionDarboux}, we can obtain $f^{\{j\}}=h_{\alpha,j}^{\{j\}}.$
Therefore,
\begin{align*}
	T_{\alpha}(f^{\{M-1\}})\cdots T_{\alpha}(f^{\{1\}})T_{\alpha}(f^{\{0\}})=T^{(\alpha,M)}(h_{\alpha,0},h_{\alpha,1},\cdots,h_{\alpha,M-1}),
\end{align*}
where $(D,M)={\{M,0\}}$ and $(I,M)={\{0,M\}}$.

Secondly, since $T_{I}(r_{j_{0}}^{\{1\}})T_{D}(q_{j_{0}}^{\{0\}})=T_{D}(q_{j_{0}}^{\{1\}})T_{I}(r_{j_{0}}^{\{0\}})=1$, the following Darboux chains
\begin{align}
    L^{\{0\}} \xrightarrow{T_{D}(q_{j_{0}}^{\{0\}})} L^{\{1\}} \xrightarrow{T_{I}(r_{j_{0}}^{\{1\}})} L^{\{2\}},\quad L^{\{0\}} \xrightarrow{T_{I}(r_{j_{0}}^{\{0\}})} L^{\{1\}}\xrightarrow{T_{D}(q_{j_{0}}^{\{1\}})} L^{\{2\}}\label{multipure}
\end{align}
are trivial, which means $L^{\{0\}}=L^{\{2\}}$.
For example, we can cut off the trivial parts in the following chain $L^{\{0\}} \xrightarrow{T_{D}(q_{1}^{\{0\}})} L^{\{1\}} \xrightarrow{T_{I}(r_{1}^{\{1\}})} L^{\{2\}} \xrightarrow{T_{D}(q_{2}^{\{2\}})}  L^{\{3\}}$, and obtain the equivalent chain $L^{\{0\}} \xrightarrow{T_{D}(q_{1}^{\{0\}})} L^{\{1\}}$.

Based upon above two points, we can do any successive Darboux transformations. For example, consider the following Darboux chain
\begin{align}\label{exampleDarboux}
	&L^{\{0\}} \xrightarrow{T_{D}(w(n;t,\lambda))} L^{\{1\}} \xrightarrow{T_{I}(w^{*}(n;t,\mu)^{\{1\}})} L^{\{2\}}\xrightarrow{T_{D}(q_{1}^{\{2\}})} L^{\{3\}}\nonumber\\
    &\xrightarrow{T_{D}(q_{1}^{\{3\}})} L^{\{4\}}\xrightarrow{T_{I}(r_{1}^{\{4\}})} L^{\{5\}}\xrightarrow{T_{D}(q_{1}^{\{5\}})} L^{\{6\}}\xrightarrow{T_{I}(r_{2}^{\{6\}})} L^{\{7\}}.
\end{align}
Firstly, notice that $L^{\{3\}}=L^{\{5\}}$ by ~\eqref{multipure} and ~\eqref{TdTi}, therefore ~\eqref{exampleDarboux} can be written as the following chain
\begin{align*}
	L^{\{0\}} \xrightarrow{T_{D}(w(n;t,\lambda))} L^{\{1\}} \xrightarrow{T_{D}(h_{D,0}^{\{1\}})} L^{\{2\}}\xrightarrow{T_{D}(h_{D,1}^{\{2\}})} L^{\{3\}}
    \xrightarrow{T_{I}(w^{*}(n;t,\mu)^{\{3\}})} L^{\{4\}}\xrightarrow{T_{I}(r_{2}^{\{4\}})} L^{\{5\}},
\end{align*}
then $T^{\{3,2\}}=T_{I}(r_{2}^{\{4\}})T_{I}(w^*(n;t,\mu)^{\{3\}})T_{D}(h_{D,1}^{\{2\}})T_{D}(h_{D,0}^{\{1\}})T_{D}(w(n;t,\lambda))$. And further according to Appendix,
\begin{align*}
	&T^{\{3,2\}}=\left|\begin{array}{cccc}
		\Delta^{-1}(r_{2}w) & \Delta^{-1}(r_{2}h_{D,0}) & \Delta^{-1}(r_{2}h_{D,1}) \\
		\Delta^{-1}(w^{*}w) & \Delta^{-1}(w^{*}h_{D,0}) & \Delta^{-1}(w^{*}h_{D,1}) \\
		w & g_{D,0} & g_{D,1}\\
	\end{array}\right|^{-1} \cdot\left|\begin{array}{cccc}
		\Delta^{-1}(r_{2}w) & \Delta^{-1}(r_{2}g_{D,0}) & \Delta^{-1}(r_{2}g_{D,1}) & \Delta^{-1}r_{2} \\
		\Delta^{-1}(w^{*}w) & \Delta^{-1}(w^{*}g_{D,0}) & \Delta^{-1}(w^{*}g_{D,1}) & \Delta^{-1}w^{*} \\
		w & g_{D,0} & g_{D,1} & 1 \\
		\Delta(w) & \Delta(g_{D,0}) & \Delta(g_{D,1}) & \Delta \\
	\end{array}\right|,\\
	&((T^{\{3,2\}})^{*})^{-1}=-\Lambda^{2}\left|\begin{array}{ccccccc}
		\Delta^{-1}w & \Delta^{-1}h_{D,0} & \Delta^{-1}h_{D,1} \\
		\Delta^{-1}(r_{2w}) & \Delta^{-1}(r_{2}h_{D,0}) & \Delta^{-1}(r_{2}h_{D,1}) \\
			\Delta^{-1}(w^{*}w) & \Delta^{-1}(w^{*}h_{D,0}) & \Delta^{-1}(w^{*}h_{D,1})
	\end{array}\right|\cdot\left|\begin{array}{cccc}
	\Delta^{-1}(r_{2}w) & \Delta^{-1}(r_{2}h_{D,0}) & \Delta^{-1}(r_{2}h_{D,1}) \\
	\Delta^{-1}(w^{*}w) & \Delta^{-1}(w^{*}h_{D,0}) & \Delta^{-1}(w^{*}h_{D,1}) \\
	w & g_{D,0} & g_{D,1}\\
	\end{array}\right|^{-1}.
\end{align*}
In particular, the tau function
\begin{align*}
\tau^{\{3,2\}}=\left|\begin{array}{cccc}
	\Delta^{-1}(wr_{2}) & \Delta^{-1}(h_{D,0}r_{2}) & \Delta^{-1}(h_{D,1}r_{2}) \\
	\Delta^{-1}(ww^{*}) & \Delta^{-1}(h_{D,0}w^{*}) & \Delta^{-1}(h_{D,1}w^{*}) \\
	w & h_{D,0} & h_{D,1}\\
\end{array}\right|\cdot\tau.
\end{align*}
\section{Solutions of generalized constrained discrete KP hierarchy from $L^{\{0\}}=\Lambda$}
In this section, we will construct solutions of $(k,m,l)$--gcdKP hierarchy by Darboux transformation from $L^{\{0\}}=\Lambda$. Note that when we construct solutions for $(k,m,l)$--gcdKP hierarchy starting from $L^{\{0\}}=\Lambda$, we can find $T_{D}(w)$ and $T_{I}(w^{*})$ can not produce non-trivial solution, while $T_{D}(q)$ or $T_{I}(r)$ can only produce solutions for $l=1$. In order to overcome this problem, we need the following result.
\begin{lemma}\label{xigua}
	\begin{align}
		&(T_{D}^{-1}(f_{{2},(1,-)}^{\{1,0\}}))^{*}((f_{1}^{[1]})^{-1})=\Big((T_{D}(f_{2})(f_{1}))^{[1]}\Big)^{-1},\label{firstone}\quad (T_{I}^{-1}(g_{(1,-)}^{\{1,0\}}))^{*}((f_{1}^{[1]})^{-1})=\Big((T_{I}(g)(f_{1}))^{[1]}\Big)^{-1},\\
		&T_{D}(f_{(-,1)}^{\{0,1\}})((g_{1}^{[1]})^{-1})=\Big(((T_{D}^{-1}(f))^{*}(g_{1}))^{[1]}\Big)^{-1},\quad\quad T_{I}(g_{2,(-,1)}^{\{0,1\}})((g_{1}^{[1]})^{-1})=\Big((T_{I}^{-1}(g_{2})^{*}(g_{1}))^{[1]}\Big)^{-1},
	\end{align}
	where ``$-$" means no action of $T_{D}$ or $T_{I}$.
\end{lemma}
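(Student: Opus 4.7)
The plan is to verify each of the four identities by direct operator computation; all four share a common telescoping mechanism. The crucial observation is that the Aratyn-companion $(f_1^{[1]})^{-1}$ (respectively $(g_1^{[\pm 1]})^{-1}$) combines with the relevant second-step Darboux generator to form a total discrete difference $\Delta(\cdot)$, so that the $\Delta^{-1}$ hidden inside $(T_D^{-1})^{*}$, $(T_I^{-1})^{*}$, $T_I$, or $T_D$ resolves in closed form.

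For the first identity of \eqref{firstone}, using $T_D(f)=f^{[1]}\cdot\Delta\cdot f^{-1}$ one has
\begin{align*}
	f_2^{\{1\}}\cdot(f_1^{[1]})^{-1}=T_D(f_1)(f_2)\cdot(f_1^{[1]})^{-1}=\Delta(f_2/f_1).
\end{align*}
Since $(T_D^{-1}(h))^{*}=(h^{[1]})^{-1}\cdot(\Delta^{-1})^{*}\cdot h$ with $(\Delta^{-1})^{*}=-\Lambda\Delta^{-1}$, applying $(T_D^{-1}(f_2^{\{1\}}))^{*}$ to $(f_1^{[1]})^{-1}$ forces the hidden $\Delta^{-1}$ to act on this total difference and produce $f_2/f_1$. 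After the resulting $[1]$-shift and substitution of the explicit expressions $f_2^{\{1\}[1]}=(f_1^{[1]}f_2^{[2]}-f_1^{[2]}f_2^{[1]})/f_1^{[1]}$ and $(T_D(f_2)(f_1))^{[1]}=(f_1^{[2]}f_2^{[1]}-f_2^{[2]}f_1^{[1]})/f_2^{[1]}$, both sides of the identity collapse to the common rational expression $f_2^{[1]}/(f_1^{[2]}f_2^{[1]}-f_2^{[2]}f_1^{[1]})$.

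The three remaining identities follow by the same pattern. For the second identity of \eqref{firstone}, one replaces $T_D(f_1)$ by $T_I(g)$ and uses $T_I(g)(f_1)=(g^{[-1]})^{-1}\Delta^{-1}(gf_1)$; the $\Delta^{-1}$ inside $(T_I^{-1}(g^{\{1\}}))^{*}$ again resolves by the same telescoping after multiplying by $(f_1^{[1]})^{-1}$. The identities in the second displayed line of the lemma are the $T_D\leftrightarrow T_I$ swapped versions of the first two, with $(g_1^{[-1]})^{-1}$ (from the $T_I$-Aratyn formula $q_{0}^{\{1\}}=(r_{j_0}^{[-1]})^{-1}$ of Theorem~\ref{theoremsigua}) taking the role of $(f_1^{[1]})^{-1}$, and with the appropriate reversal of shift direction throughout.

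The main technical obstacle is bookkeeping: keeping the $[\pm 1]$ shifts coming from $(\Delta^{-1})^{*}$ and the signs appearing in operator adjoints consistent across all four cases. Once the telescoping identity is isolated — which is the essential content of the lemma — no step goes beyond elementary fraction algebra. A conceptually parallel viewpoint, useful for cross-checking, is that in the alternative Darboux order guaranteed by the commutativity \eqref{TdTi} (with generators $T_D(f_2)(f_1)$, $T_I(g)(f_1)$, etc.\ in the final step), the Aratyn step itself manufactures precisely the RHS of each identity, making the equality a coherence condition for the two orderings to agree.
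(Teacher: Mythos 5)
Your proposal is correct and takes essentially the same route as the paper: both rest on the observation that $T_D(f_1)(f_2)\cdot(f_1^{[1]})^{-1}=\Delta(f_1^{-1}f_2)$, so the $\Delta^{-1}$ hidden in $(T_D^{-1})^{*}$ telescopes and yields $-\big((T_D(f_1)(f_2))^{[1]}\big)^{-1}(f_1^{[1]})^{-1}f_2^{[1]}$, after which both treat the remaining three identities as analogous. The only difference is cosmetic: you conclude by substituting explicit rational expressions for the two sides, whereas the paper concludes via the product identity $\Delta(f_1^{-1}f_2)=-(f_1^{[1]})^{-1}f_2^{[1]}\cdot\Delta(f_1f_2^{-1})\cdot f_1^{-1}f_2$.
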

\begin{proof}
	Here we only prove the first equation, since others are almost similar. Firstly, note that
	$f_{{2},(1,-)}^{\{1,0\}}=T_{D}(f_{1})(f_{2})=f_{1}(n+1)\cdot\Delta(f^{-1}_{1}(n)f_{2}(n)),$ we have
	\begin{align*}
		(T_{D}^{-1}(f_{{2},(1,-)}^{\{1,0\}}))^{*}((f_{1}^{[1]})^{-1})&=-(f_{{2},(1,-)}^{\{1,0\}}(n+1))^{-1}\cdot(f_{1}^{-1}(n+1)f_{2}(n+1)),
	\end{align*}
	Further by the relation below
	$$\Delta(f^{-1}(n)f_{2}(n))=-f^{-1}_{1}(n+1)f_{2}(n+1)\cdot\Delta(f_{1}(n)f_{2}^{-1}(n))\cdot f_{1}^{-1}(n)f_{2}(n),$$
	we can finally get $(T_{D}^{-1}(f_{{2},(1,-)}^{\{1,0\}}))^{*}((f_{1}^{[1]})^{-1})=\bigg(\Big(T_{D}(f_{2})(f_{1})\Big)^{[1]}\bigg)^{-1}.$
\end{proof}
By Lemma \ref{xigua} and Proposition \ref{propositionDarboux}, we can obtain the proposition below.
\begin{theorem}\label{cdkpthe}
	Under $T_{(\vec{M},\vec{N})}^{\{M,N\}}$ (see ~\eqref{yifan}), the Lax operator $L$ of $(k,m,l)$-gcdKP hierarchy becomes
	\begin{align}
		(L^{\{M,N\}})^{k}_{<m}=&\sum_{j=1}^{M}(T_{(\vec{M},\vec{N})}^{\{M,N\}}(L^{\{0\}})^{k})(f_{j})\cdot\Lambda^{m}\Delta^{-1}\cdot \left(\Big(T_{(\vec{M}\backslash \{j\},\vec{N})}^{\{M-1,N\}}(f_{j})\Big)^{[1] }\right)^{-1}\nonumber\\
		&+\sum_{j=1}^{N}\bigg(\Big((T^{\{M,N-1\}}_{(\vec{M},\vec{N}\backslash \{j\})})^{-1*}(g_{j})\Big)^{[-1]}\bigg)^{-1}\cdot\Lambda^{m}\Delta^{-1}\cdot
		((T^{\{M,N\}}_{(M,N)})^{-1})^{*}((L^{\{0\}})^{k})^{*}(g_{j})\nonumber\\
		&+\sum_{j=1}^{M}T^{\{M,N\}}_{(\vec{M},\vec{N})}(q_{j})\cdot\Lambda^{m}\Delta^{-1}\cdot((T^{\{M,N\}}_{(\vec{M},\vec{N})})^{-1})^{*}(r_{j}).
	\end{align}
	where $\vec{M}\backslash\{j\}=\{M,\cdots,j+1,j-1,\cdots,1\},\ \vec{N}\backslash\{j\}=\{N,\cdots,j+1,j-1,\cdots,1\}$.
\end{theorem}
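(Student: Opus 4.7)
The plan is to establish the formula by induction on $M+N$, using the commutativity relations (\ref{TdTi}) to place one Darboux factor at a time in a convenient position. The three sums in the statement correspond to three disjoint classes of contributions to $(L^{\{M,N\}})^k_{<m}$: one summand from each of the $M$ Aratyn-style $T_D$-steps, one from each of the $N$ Aratyn-style $T_I$-steps, and one from each of the $l$ original constraint pairs $(q_j,r_j)$ carried along the chain. I verify the three classes separately.

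The third class is immediate: Proposition \ref{propositionDarboux} says each single Darboux step sends $q_j\mapsto T(q_j)$ and $r_j\mapsto(T^{-1})^{*}(r_j)$, so iteration over the $M+N$ factors yields the stated summand $T^{\{M,N\}}(q_j)\,\Lambda^m\Delta^{-1}\,((T^{\{M,N\}})^{-1})^{*}(r_j)$.

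For the first class, fix $j\in\{1,\dots,M\}$. By (\ref{TdTi}) I reorder the $T_D$ factors (updating their generators accordingly) so that the $T_D$ corresponding to $f_j$ is applied last among the $M$ $T_D$'s, immediately before all the $T_I$'s. At this intermediate stage Proposition \ref{propositionDarboux} produces a new summand $q_0\,\Lambda^m\Delta^{-1}\,r_0$ with
\[ q_0=T^{\{M,0\}}\bigl(L^k(f_j)\bigr),\qquad r_0=\bigl(f_j^{\{M-1\}\,[1]}\bigr)^{-1}. \]
The remaining $T_I$'s act on $q_0$ as on an ordinary eigenfunction, turning it into $(T^{\{M,N\}}L^k)(f_j)$. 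For $r_0$ I invoke Lemma \ref{xigua}: the identity $(T_I^{-1}(g))^{*}((f^{[1]})^{-1})=((T_I(g)(f))^{[1]})^{-1}$ shows that each $T_I$ simply updates the function $f$ inside the inverse shift. After iterating over all $N$ $T_I$'s, the inner function becomes $T^{\{M-1,N\}}_{(\vec M\setminus\{j\},\vec N)}(f_j)$, producing precisely the first sum.

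For the second class, fix $j\in\{1,\dots,N\}$ and put $T_I(g_j)$ last among all Darboux factors. Proposition \ref{propositionDarboux} then gives $q_0$ and $r_0$ directly; $q_0=((g_j^{\{M+N-1\}})^{[-1]})^{-1}$ matches the statement upon writing $g_j^{\{M+N-1\}}=((T^{\{M,N-1\}}_{(\vec M,\vec N\setminus\{j\})})^{-1})^{*}(g_j)$, while a short calculation using the intertwining $L^{k,\{M+N-1\}}=T'L^k(T')^{-1}$ (with $T'=T^{\{M,N-1\}}_{(\vec M,\vec N\setminus\{j\})}$) and $(AB)^{*}=B^{*}A^{*}$ reduces $r_0$ to $((T^{\{M,N\}})^{-1})^{*}L^{k,*}(g_j)$. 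The main obstacle I anticipate is the bookkeeping required to justify the successive reorderings by (\ref{TdTi}) and to track how Lemma \ref{xigua} propagates through a multi-step chain; both are direct consequences of the cited results, so no new ingredients beyond Proposition \ref{propositionDarboux}, Lemma \ref{xigua} and (\ref{TdTi}) are required.
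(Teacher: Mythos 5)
Your proposal is correct and follows essentially the same route as the paper, which proves the theorem precisely by iterating Proposition \ref{propositionDarboux} step by step and invoking Lemma \ref{xigua} (together with the commutation relations \eqref{TdTi}) to track how the $(f^{[1]})^{-1}$- and $(g^{[-1]})^{-1}$-type terms propagate; the paper in fact states the result with only this one-line justification, and your write-up supplies the natural bookkeeping behind it. The only minor discrepancy is notational: the third sum should run over the $l$ original pairs $(q_j,r_j)$ as you say, where the theorem's displayed upper limit $M$ is evidently a typo.
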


For dKP Lax operator $L^{\{0\}}=\Lambda$, the corresponding wave function $w_{i}$ and adjoint wave function $w_{i}^{*}$, and eigenfunction $f_{i}$ and adjoint eigenfunction $g_{i}$ usually have the forms below, that is,
\begin{align*}
	&w_{i}=\lambda_{0,i}^{n}e^{\xi(t,\lambda_{0,i})},\quad\quad\quad w_{i}^{*}=\mu_{0,i}^{-n}e^{-\xi(t,\mu_{0,i})},\\
	&f_{i}=\sum_{j=1}^{\alpha_i}a_{j,i}\lambda_{j,i}^{n}e^{\xi(t,\lambda_{j,i})},\quad g_{i}=\sum_{j=1}^{\beta_i}b_{j,i}\lambda_{j,i}^{-n}e^{-\xi(t,\lambda_{j,i})}.
\end{align*}
To get solutions for $(k,m,l)$--gcdKP hierarchy, we can firstly apply Darboux transformation operator $T^{\{M,N\}}=(f_{1},\cdots,f_{M},g_{1},\cdots,g_{N})$ to $L^{\{0\}}=\Lambda$ for $M+N=l$. In this case, $L^{\{M,N\}}$ will have the form $(L^{\{l\}})_{<m}=\sum_{i=1}^{l}q_{i}\Lambda^{m}\Delta^{-1}r_{i}$, where
\begin{align*}
	&q_{i}=(T_{(\vec{M},\vec{0})}^{\{M,0\}}L^{\{0\}})(f_{i}),\quad\quad\quad\quad\quad\quad r_{i}=\bigg(\Big(T_{(\vec{M}\backslash \{i\},\vec{0})}^{\{M-1,0\}}(f_{i})\Big)^{[1]}\bigg)^{-1}, \quad \quad\quad 1\leq i\leq M,\\
	&q_{i}=\bigg(\Big((T^{\{M,N-1\}}_{(\vec{M},\vec{N}\backslash \{i\})})^{-1*}(g_{i})\Big)^{[-1]}\bigg)^{-1},\quad\quad r_{i}=((T^{\{M,N\}}_{(\vec{M},\vec{N})})^{-1})^{*}(L^{\{0\}})^{*}(g_{i}),\quad\quad  M+1\leq i\leq l.
\end{align*}
so that we can continue applying the Darboux transformation in Theorem \ref{theoremsigua} and get various solutions.
\begin{example}
$\bullet$ When $l=1$, to obtain the corresponding solutions, we can do the following Darboux chain starting from $L^{\{0\}}=\Lambda$,
$$L^{\{0\}}\xrightarrow{T_{D}(f)}L^{\{1\}}\xrightarrow{T_{D}(f^{\{1\}})}L^{\{2\}}\xrightarrow{T_{D}(f^{\{2\}})}\cdots\xrightarrow{T_{D}(f^{\{M-1\}})}L^{\{M\}}$$
where $f$ is the dKP eigenfunction corresponding to Lax operator $L^{\{0\}}=\Lambda$, i.e., $f(n)_{t_p}=f(n+p)$,
then
\begin{align}
(L^{\{M\}})_{<m}=q\Lambda^{m}\Delta^{-1}r\label{cdkpLM}
\end{align}
with
\begin{align*}
    &q=\frac{W_{M+1}(f,f^{[k]},\cdots,f^{[k(M-1)]},f^{[kM]})}{W_{M}(f,f^{[k]},\cdots,f^{[k(M-1)]})},\\
 	&r=\frac{W_{M}(f,f^{[k]},\cdots,f^{[k(M-1)]})}{W_{M+1}(f,f^{[k]},\cdots,f^{[k(M-1)]},f^{[kM]})},\\
 	&\tau=W_{M}(f,f^{[k]},\cdots,f^{[k(M-1)]}).
 \end{align*}
and the part of $(L^{\{M\}})_{\geq m}$ can be given by ~\eqref{taufunction}.

For ($1,0,1$)--gcdKP hierarchy, the corresponding Lax operator has the following form (see examples in ~\eqref{pingguo})
$$L=\Lambda+u_0+q\Delta^{-1}r,$$
To get solutions of $q$, $r$, $\tau$ and $u_0$, let us
choose
 $$f=e^{t_1+t_2}+2^ne^{2t_1+4t_2}+3^ne^{3t_1+9t_2},$$
and take $M=2$ in ~\eqref{cdkpLM}, then we can get
\begin{align*}
 &q=\frac{4e^{6t_{1}+14t_{2}}6^{n}}{e^{5t_{1}+13t_{2}}6^{n}+e^{3t_{1}
 +5t_{2}}2^{n}+4e^{4t_{1}+10t_{2}}3^{n}},\quad r=\frac{e^{4t_{2}}3^{n+1}+e^{-t_{1}-t_{2}}2^{n+1}+e^{-2t_{1}-4t_{2}}}{e^{2t_{1}+8t_{2}}6^{n+1}+4e^{t_{1}+5t_{2}}3^{n+1}+2^{n+1}},\\
&\tau=e^{5t_{1}+13t_{2}}6^{n}+e^{3t_{1}+5t_{2}}2^{n}+4e^{4t_{1}+10t_{2}}3^{n},\\
&u_{0}=\frac{2(2e^{4_{t_{1}}+16t_{2}}72^{n}+4e^{2t_{1}+10t_{2}}18^{n}+2e^{2t_{1}+8t_{2}}24^{n}+12e^{4t_{1}+18t_{2}}54^{n}+e^{t_{1}+5t_{2}}12^{n}+3e^{5t_{1}+21t_{2}}108^{n}+12e^{3t_{1}+13t_{2}}36^{n})}{(3e^{2t_{1}+8t_{2}}6^{n}+6e^{t_{1}+5t_{2}}3^{n}+2^{n})\cdot(e^{2t_{1}+8t_{2}}6^{n}+4e^{t_{1}+5t_{2}}3^{n}+2^{n})^{2}}
 	\end{align*}
 	\end{example}
 \begin{example}
 	$\bullet$ For $(1,-1,2)$--gcdKP hierarchy, let us do  the following Darboux chain starting from $L^{\{0\}}=\Lambda$,
 	$$L^{\{0\}}\xrightarrow{T_{D}(f_{1})}L^{\{1\}}
 \xrightarrow{T_{D}(f_{2}^{\{1\}})}L^{\{2\}}\xrightarrow{T_{D}(w^{*})}L^{\{3\}},$$
 where $f_{1}=e^{t_{1}+t_{2}}+2^{n}e^{2t_{1}+4t_{2}}$,
 $f_{2}=3^{n}e^{3t_{1}+9t_{2}}+4^{n}e^{4t_{1}+16t_{2}}$,
 $w^{*}=5^{-n}e^{-5t_{1}-25t_{2}}$,
 	then the corresponding Lax operator becomes the following form $$L^{\{3\}}=\Lambda+ u_{0}+u_{1}\Lambda^{-1}+q_{1}\Lambda^{-1}\Delta^{-1}r_{1}+q_{2}\Lambda^{-1}\Delta^{-1}r_{2},$$
where
\begin{align*}
   &q_{1}=\frac{4e^{5t_{1}+19t_{2}}24^{n}+9e^{4t_{1}+16t_{2}}12^{n}}{2^{3t_{1}+3}e^{2t_{1}+10t_{2}}+9e^{t_{1}+7t_{2}}4^{n}+2e^{t_{1}+3t_{2}}6^{n}+3^{n+1}},\quad r_{1}=-\frac{(2^{n+3}e^{-7t_{1}-31t_{2}}+3e^{-8t_{1}-34t_{2}})e^{5t_{1}+25t_{2}}}{8^{n+2}e^{2t_{1}+10t_{2}}+36e^{t_{1}+7t_{2}}4^{n}+12e^{t_{1}+3t_{2}}6^{n}+3^{n+2}},\\
    &q_{2}=\frac{6e^{3t_{1}+11t_{2}}8^{n}+e^{2t_{1}+4t_{2}}6^{n}}{2^{3t_{1}+3}e^{2t_{1}+10t_{2}}+9e^{t_{1}+7t_{2}}4^{n}+2e^{t_{1}+3t_{2}}6^{n}+3^{n+1}},\quad r_{2}=\frac{6(3^{n+1}e^{-6t_{1}-26t_{2}}+8\cdot 4^{n}e^{-5t_{1}-19t_{2}})e^{5t_{1}+25t_{2}}}{8^{n+2}e^{2t_{1}+10t_{2}}+36e^{t_{1}+7t_{2}}4^{n}+12e^{t_{1}+3t_{2}}6^{n}+3^{n+2}},\\
    &\tau=\frac{5}{12}e^{-5t_{1}-25t_{2}}5^{-n}(3^{n+1}e^{4t_{1}+10t_{2}}+9\cdot 4^{n}e^{5t_{1}+17t_{2}}+2\cdot 6^{n}e^{5t_{1}+13t_{2}}+8^{n+1}e^{6t_{1}+20t_{2}}),\\
    &u_{0}=(2784e^{4t_{1}+20t_{2}}+243\cdot 48^{n}e^{2t_{1}+14t_{2}}+36\cdot 108^{n}e^{2t_{1}+6t_{2}}+64e^{2t_{1}+10t_{2}}288^{n}\\
    &\quad\quad+81e^{t_{1}+7t_{2}}288^{n}+5e^{3t_{1}+13t_{2}}144^{n+1}+9e^{5t_{1}+27t_{2}}256^{n+1}+3240e^{3t_{1}+17t_{2}}96^{n}+256e^{5t_{1}+23t_{2}}384^{n})\\
    &\quad\quad\times\frac{1}{(8^{n+2}e^{2t_{1}+10t_{2}}+9\cdot 4^{n+1}e^{t_{1}+7t_{2}}+2\cdot 6^{n+1}e^{t_{1}+3t_{2}}+3^{n+2})\cdot(2^{3n+3}e^{2t_{1}+10t_{2}}+9\cdot 4^{n}e^{t_{1}+7t_{2}}+2\cdot 6^{n}e^{t_{1}+3t_{2}}+3^{n+1})^{2}},\\
    &u_{1}=e^{5t_{1}+25t_{2}}(4e^{-3t_{1}+-15t_{2}}24^{n+1}+27e^{-4t_{1}-18t_{2}}12^{n}+6e^{-4t_{1}-22t_{2}}18^{n}+72e^{-2t_{1}-8t_{2}}32^{n}+16e^{-2t_{1}-12t_{2}}48^{n})\\
    &\quad\quad\quad\times
    \frac{1}{(2^{3n+3}e^{2t_{1}+10t_{2}}+9e^{t_{1}+7t_{2}}4^{n}+2e^{t_{1}+3t_{2}}6^{n}+3^{n+1})^{2}}.
\end{align*}
Here we have used the formula $\Delta^{-1}(z^n)=z^n/(z-1)$.
 \end{example}
\section{Conclusions and Discussions}
For $(k, m,l)$--gcdKP hierarchy, we firstly check that the corresponding definition is well--defined (see proposition \ref{prop:gcdkp}), then give an equivalent bilinear equation formulation in Theorem \ref{theorem:bilinear--gcdKP}. After that, the corresponding Darboux transformations are given in Proposition \ref{propositionDarboux} and Theorem \ref{theoremsigua}. Finally we give a systemetic method (see Theorem \ref{cdkpthe}) to construct solutions of $(k, m,l)$--gcdKP hierarchy by Darboux transformations from $L^{\{0\}}=\Lambda$. Notice that we can easily solve the linear differential equations $\partial_{t_p}f(n)=f(n+p)$ and $\partial_{t_p}g(n)=-f(n-p)$ corresponding to $L^{\{0\}}=\Lambda$, so that we can solve gcdKP hierarchy by Darboux transformations.

The method used here to get solutions of gcdKP is quite typical, which may be used in other integrable systems especially for the types of constrained KP hierarchy. We believe there should be some interesting integrable systems to be found in this kind of gcdKP hierarchy discussed here.
\section*{Appendix}
$\bullet$ $\textbf{Case}$ $M>N$,
\begin{align*}
	&T^{\{M,N\}}=\frac{1}{IW_{M,N}\left(r_{N}, \cdots, r_{1} ; q_{1}, \cdots, q_{M}\right)}\left|\begin{array}{cccc}
		\Delta^{-1}(r_{N} q_{1}) & \cdots & \Delta^{-1}(r_{N} q_{M}) & \Delta^{-1}r_{N} \\
		\vdots & \ddots & \vdots & \vdots \\
		\Delta^{-1}(r_{1} q_{1}) & \cdots & \Delta^{-1}(r_{1} q_{M}) & \Delta^{-1}r_{1} \\
		q_{1} & \cdots & q_{M} & 1 \\
		\Delta q_{1} & \cdots & \Delta q_{M} & \Delta \\
		\vdots & \ddots & \vdots & \vdots \\
		\Delta^{M-N} q_{1} & \cdots & \Delta^{M-N} q_{M} & \Delta^{M-N}
	\end{array}\right|,\\
	&((T^{\{M,N\}})^{*})^{-1}=(-1)^{M}\Lambda\left|\begin{array}{cccc}
		\Delta^{-1}q_{1}& \Delta^{-1}q_{2} & \cdots  & \Delta^{-1}q_{M} \\
		\Delta^{-1}(r_{N} q_{1}) & \Delta^{-1}(r_{N} q_{2})& \cdots  & \Delta^{-1}(r_{N}q_{M}) \\
		\vdots  & \vdots & \ddots & \vdots \\
		\Delta^{-1}(r_{1} q_{1}) & \Delta^{-1}(r_{1} q_{2})& \cdots  & \Delta^{-1}(r_{1}q_{M}) \\
		q_{1} & q_{2} &\cdots & q_{M}  \\
		\Delta q_{1} & \Delta q_{2} & \cdots & \Delta q_{M}  \\
		\vdots  & \vdots & \ddots & \vdots \\
		\Delta^{M-N-2} q_{1} & \Delta^{M-N-2} q_{2} & \cdots & \Delta^{M-N-2} q_{M}
	\end{array}\right|\cdot\frac{1}{\Lambda(IW_{M,N}(r_{N},\cdots,r_{1};q_{1},\cdots,q_{M}))},
\end{align*}
where the generalized Wronskain determinant $IW_{M,N}(r_{N},\cdots,r_{1};q_{1},\cdots,q_{M})$ is defined as follows,
\begin{equation}
	IW_{M,N}(r_{N},\cdots,r_{1};q_{1},\cdots,q_{M})
	=\left|\begin{array}{cccc}
		\Delta^{-1}(r_{N}q_{1}) & \Delta^{-1}(r_{N}q_{2})  & \cdots & \Delta^{-1}(r_{N}q_{M}) \\
		\Delta^{-1}(r_{N-1}q_{1}) & \Delta^{-1}(r_{N-1}q_{2}) & \cdots & \Delta^{-1}(r_{N-1}q_{M})\\
		\vdots & \vdots & \ddots & \vdots \\
		\Delta^{-1}(r_{1}q_{1}) & \Delta^{-1}(r_{1}q_{2}) & \cdots & \Delta^{-1}(r_{1}q_{M})\\
		q_{1} & q_{2} & \cdots & q_{M}\\
		\Delta q_{1} & \Delta q_{2} & \cdots & \Delta q_{M}\\
		\vdots & \vdots & \ddots & \vdots\\
		\Delta^{M-N-1}q_{1} & \Delta^{M-N-1}q_{2} & \cdots & \Delta^{M-N-1}q_{M}
	\end{array}\right|.\label{Wronskain}
\end{equation}
When $N=0$, ~\eqref{Wronskain} is just the Casoratian determinant
\begin{align*}
	W_{M}(q_{1},\cdots,q_{M})=\left|\begin{array}{cccc}
		q_{1} & q_{2} & \cdots & q_{M}\\
		\Delta q_{1} & \Delta q_{2} & \cdots & \Delta q_{M}\\
		\vdots & \vdots & \ddots & \vdots\\
		\Delta^{M-N-1}q_{1} & \Delta^{M-N-1}q_{2} & \cdots & \Delta^{M-N-1}q_{M}
	\end{array}\right|.
\end{align*}

$\bullet$ $\textbf{Case}$ $M=N$,
\begin{align*}
	&T^{\{M,M\}}=\frac{1}{IW_{M,M}(r_{M}, \cdots, r_{1} ; q_{1}, \cdots, q_{M})} \cdot\left|\begin{array}{cccc}
		\Delta^{-1}(r_{M} q_{1}) & \cdots & \Delta^{-1}(r_{M} q_{M}) & \Delta^{-1}r_{M} \\
		\vdots & \ddots & \vdots & \vdots \\
		\Delta^{-1}(r_{1} q_{1}) & \cdots & \Delta^{-1}(r_{1} q_{M}) & \Delta^{-1}r_{1} \\
		q_{1} & \cdots & q_{M} & 1
	\end{array}\right|,\\
	&((T^{\{M,M\}})^{*})^{-1}=\frac{1}{\Lambda(IW_{M,M}(r_{M}, \cdots, r_{1} ; q_{1}, \cdots, q_{M}))}\cdot\left|\begin{array}{cccc}
		\Delta^{-1}(q_{M}r_{1}) & \cdots & \Delta^{-1}(q_{M}r_{M}) & \Delta^{-1}q_{M}\\
		\vdots & \ddots & \vdots & \vdots \\
		\Delta^{-1}(q_{1}r_{1}) & \cdots & \Delta^{-1}(q_{1}r_{M}) &\Delta^{-1}q_{1}  \\
		r_{1} &  \cdots & r_{m} & 1
	\end{array}\right|.
\end{align*}
$\bullet$ $\textbf{Case}$ $M<N$,
\begin{align*}
	&T^{\{M,N\}}=\frac{1}{\Lambda^{-1}({IW}^{*}_{M,N}(q_{M},\cdots,q_{1};r_{1},\cdots,r_{N}))}\cdot(-1)^{N}\Lambda^{-1}\left|\begin{array}{cccccccc}
		(\Delta^{*})^{-1}r_{1} & \cdots & (\Delta^{*})^{-1}r_{N} \\
		(\Delta^{*})^{-1}(r_{2}q_{M}) & \cdots & (\Delta^{*})^{-1}(r_{N}q_{M})\\
		\vdots & \ddots & \vdots \\
	    (\Delta^{*})^{-1}(r_{1}q_{1}) & \cdots & (\Delta^{*})^{-1}(r_{N}q_{1}) \\
		r_{1} & \cdots & r_{N}\\
		\Delta^{*}r_{1} & \cdots & \Delta^{*}r_{N} \\
		\vdots & \ddots & \vdots \\
		(\Delta^{*})^{N-M-2}r_{1} & \cdots & (\Delta^{*})^{N-M-2}r_{N}\\
	\end{array}\right|,\\
	&((T^{\{M,N\}})^{*})^{-1}=\frac{1}{{IW}^{*}_{M,N}(q_{M},\cdots,q_{1};r_{1},\cdots,r_{N})}\cdot\left|\begin{array}{cccccccc}
		(\Delta^{*})^{-1}(q_{M}r_{1}) & \cdots & (\Delta^{*})^{-1}(q_{M}r_{N}) & (\Delta^{*})^{-1}q_{M} \\
		(\Delta^{*})^{-1}(q_{M-1}r_{1}) & \cdots & (\Delta^{*})^{-1}(q_{M-1}r_{N}) & (\Delta^{*})^{-1}q_{1} \\
		\vdots & \ddots & \vdots & \vdots \\
		(\Delta^{*})^{-1}(q_{1}r_{1}) & \cdots & (\Delta^{*})^{-1}(q_{1}r_{N}) & (\Delta^{*})^{-1}q_{1} \\
		r_{1} & \cdots & r_{N} & 1 \\
		\Delta^{*}(r_{1}) & \cdots & \Delta^{*}(r_{N}) & \Delta^{*}\\
		\vdots & \ddots & \vdots & \vdots\\
		(\Delta^{*})^{N-M}(r_{1}) & \cdots & (\Delta^{*})^{N-M}(r_{N}) & (\Delta^{*})^{N-M} \\
	\end{array}\right|,
\end{align*}
where the generalized inverse Wronskain determinant ${IW}^{*}_{M,N}(q_{M},\cdots,q_{1};r_{1},\cdots,r_{N})$ is below
\begin{equation*}
	{IW}^{*}_{M,N}(q_{M},\cdots,q_{1};r_{1},\cdots,r_{N})
	=\left|\begin{array}{cccc}
		(\Delta^{*})^{-1}(q_{M}r_{1}) & (\Delta^{*})^{-1}(q_{M}r_{2}) & \cdots & (\Delta^{*})^{-1}(q_{M}r_{N}) \\
		\vdots & \vdots & \ddots & \vdots \\
		(\Delta^{*})^{-1}(q_{1}r_{1}) & (\Delta^{*})^{-1}(q_{1}r_{2}) & \cdots & (\Delta^{*})^{-1}(q_{1}r_{N}) \\
		r_{1} & r_{2} & \cdots & r_{N}\\
        \Delta^{*}(r_{1}) & \Delta^{*}(r_{2})  & \cdots & \Delta^{*}(r_{N})  \\
		\vdots & \vdots & \ddots & \vdots\\
		(\Delta^{*})^{N-M-1}(r_{1}) & (\Delta^{*})^{N-M-1}(r_{2})  & \cdots & (\Delta^{*})^{N-M-1}(r_{N})
	\end{array}\right|.
\end{equation*}

\noindent{\bf Acknowledgements}:

This work is supported by National Natural Science Foundation of China (Grant Nos. 12171472, 12261072 and 12071304), Shenzhen Natural Science
Fund (the Stable Support Plan Program) (Grants 20220809163103001),  Guangdong Basic and Applied Basic Research Foundation (Grants 2024A1515013106)
and ``Qinglan Project" of Jiangsu Universities.\\

\noindent{\bf Conflict of Interest}:

The authors have no conflicts to disclose.\\

\noindent{\bf Data availability}:

Date sharing is not applicable to this article as no new data were created or analyzed in this study.


\begin{thebibliography}{99}

\bibitem{Adler1999}M. Adler and P. van Moerbeke, Vertex operator solutions to the discrete KP--hierarchy. Comm. Math. Phys. 203 (1999) 185--210.

\bibitem{2Adler1999}
M. Adler and P. van Moerbeke, Generalized orthogonal polynomials, discrete KP and Riemann-Hilbert problems. Comm. Math. Phys. 207 (1999) 589--620.

\bibitem{Aratyn1995}H. Aratyn, E. Nissimov, S. Pacheva, Darboux-B${\rm \ddot{a}}$cklund solutions of SL(p,q) KP-KdV hierarchies, constrained generalized Toda lattices, and two-matrix string model. Phys. Lett. A 201 (1995) 293--305.


\bibitem{Chen2017}K. Chen, X. Deng and D. J. Zhang, Symmetry constraint of the differential-difference KP hierarchy and a second discretization of the ZS-AKNS system. J. Nonlinear Math. Phys. 24 (2017) 18--35.
    
\bibitem{Chen2019}K. Chen, E. G. Fan and D. J. Zhang, Bilinear equations and solutions for $k$--constrained D$\Delta$KP. Poster in China--Japan Joint Workshop on Integrable Systems 2019, Japan, August 19--22, 2019.

\bibitem{Chen2021}K. Chen, C. Zhang and D. J. Zhang, Squared eigenfunction symmetry of the $\rm D\Delta mKP$ hierarchy and its constraint. Stud. Appl. Math. 147 (2021) 75--791.

 \bibitem{Cheng1992}Y. Cheng, Constraints of the Kadomtsev--Petviashvili hierarchy. J. Math. Phys. 33 (1992) 3774--3782.


\bibitem{Cheng1994}Y. Cheng, Y. J. Zhang, Bilinear equations for the constrained KP hierarchy. Inverse Probl. 10 (1994) L11--L17.


\bibitem{Chengjgp2018}J. P. Cheng, M. H. Li and K. L. Tian, On the modified KP hierarchy: tau functions, squared eigenfunction symmetries and additional symmetries.  J. Geom. Phys. 134 (2018) 19--37.

\bibitem{Date1983}E. Date, M. Kashiwara, M. Jimbo and T. Miwa, Transformation groups for soliton equations, in {\it Nonlinear integrable systems classical theory and quantum theory} (Kyoto, 1981), World Sci. Publishing, Singapore, 1983, 39--119.
\bibitem{Dickey1999}L. A. Dickey, Modified KP and discrete KP,  Lett. Math. Phys. 48 (1999) 277--289.

\bibitem{Haine2000}L. Haine and P. Iliev, Commutative rings of difference operators and an adelic flag manifold. Internat. Math. Res. Notices, 2000, 281--323.


\bibitem{He2003}J. S. He, Y. S. Li, Y. Cheng, Two choices of the gauge transformation for the AKNS hierarchy through the constrained KP hierarchy. J. Math. Phys. 44 (2003) 3928--3960.

\bibitem{Hu2020} X. Y. Hu, Bilinear equations of the constrained discrete KP hierarchy (in Chinese). College Math. 36 (2020) 29--30.

\bibitem{Huang2013}W. Fu, L. Huang, K. M. Tamizhmani, D. J. Zhang, Integrability properties of the differential-difference Kadomtsev-Petviashvili hierarchy and continuum limits. Nonlinearity, 26 (2013) 3197--3229.


\bibitem{Jimbo1983}M. Jimbo and T. Miwa, Solitons and infinite dimensional Lie algebras.
Publ. Res. Inst. Math. Sci. 19 (1983) 943--1001.

\bibitem{Konopelchenko1993}B. Konopelchenko and W. Oevel, An r--matrix approach to nonstandard classes of integrable equations. Publ. Res. Inst. Math. Sci. 29 (1993) 581--666.

\bibitem{Li2013}M. H. Li, J. P. Cheng and J. S. He, The gauge transformation of the constrained semi--discrete KP hierarchy. Modern Phys. Lett. B, 27 (2013) 1350043.

\bibitem{2Li2013}M. H. Li, C. Z. Li, K. L. Tian, J. S. He, Y. Cheng, Virasoro type algebraic structure hidden in the constrained discrete Kadomtsev--Petviashvili hierarchy. J. Math. Phys. 54 (2013) 043512.


\bibitem{Liu2010}S. W. Liu, Y. Cheng, J. S. He, The determinant representation of the gauge transformation for the discrete KP hierarchy. Sci. China Math. 53 (2010) 1195--1206.


\bibitem{Liu2024}J. Liu, D. J. Zhang, X. H. Zhao, Symmetries of the $\rm D\Delta mKP$ hierarchy and their continuum limits. Stud. Appl. Math. 152 (2024) 404--430.

\bibitem{Liuarxiv}Y. Liu, X. J. Yan,  J. B. Wang,  J. P. Cheng, Generalized bigraded Toda hierarchy. arXiv:2405.19952.


\bibitem{Merola1994}I. Merola, O. Ragnisco, G. Z. Tu, A novel hierarchy of integrable lattices. Inverse Probl. 10 (1994) 1315--1334.

\bibitem{Ohta1988}Y. Ohta, J. Satsuma, D. Takahashi and T. Tokihiro, An elementary introduction to Sato theory. Progr. Theoret. Phys. Suppl. 94 (1988) 210--241.

\bibitem{Oevel1993}W. Oevel, Darboux theorems and Wronskian formulas for integrable systems. I. Constrained KP flows. Phys. A 195 (1993) 533--576.

\bibitem{Oevel1996}W. Oevel, Darboux transformations for integrable lattice systems, in {\it Nonlinear physics: theory and experiment} (Lecce, 1995), World Scientific Publishing Co. Inc., River Edge, NJ, 1996, 233--240.
\bibitem{Oevel1998}W. Oevel, S. Carillo, Squared eigenfunction symmetries for soliton equations: Part \uppercase\expandafter{\romannumeral2}. J. Math. Anal. Appl. 217 (1998) 179--199.


\bibitem{Song2023}J. Q. Song, C. Li, J. P. Cheng, M. R. Chen, Constrained discrete KP hierarchy: the constraint on the tau functions and gauge transformations. Theoret. Math. Phys. 214 (2023) 334--353.


\bibitem{Tamizhmani2000}K. M. Tamizhmani, S. Kanaga Vel, Gauge equivalence and $l$--reductions of the differential-difference KP equation. Chaos Solitons Fractals, 11 (2000) 137--143.

\bibitem{van1994}P. van Moerbeke,
Integrable foundations of string theory, in {\it Lectures on integrable systems} (Sophia--Antipolis, 1991), World Sci. Publ., River Edge, NJ, 1994,163--267.


\bibitem{Willow2004}
R. Willox, J. Satsuma, Sato theory and transformation groups. A unified approach to integrable systems, in {\it Discrete integrable systems}(Berlin,2004), Lecture Notes in Phys. Springer-Verlag, 644 (2004)
17--55.
\bibitem{Willow2015}R. Willox, M. Hattori, Discretisations of constrained KP hierarchies. J. Math. Sci. The University of Tokyo, 22(2015) 613--661.

\bibitem{Yan2022}Z. W. Yan, J. P. Cheng, Several integrable properties of the discrete KP hierarchy obtained by means of gauge transformations. Rep. Math. Phys. 89 (2022) 371--389.

\end{thebibliography}
\end{document}